 \newtheorem{corollary}{Corollary}
 \newtheorem{lemma}{Lemma}
 \newtheorem{example}{Example}
 \newtheorem{claim}{Claim}
\newcommand{\name}[1]{graphical matching problem}
\newcommand{\obar}[1]{\overline{#1}}
\newcommand{\alloc}{M}
\newcommand{\pgraph}{G}
\newcommand{\igraph}{H}
\newcommand{\sw}{\mathit{SW}}
\newcommand{\txtblue}{\mathit{blue}}
\newcommand{\txtred}{\mathit{red}}
\newcommand{\cycle}{cycle}
\title{Graphical One-Sided Markets}
\author{
Sagar Massand
\And
Sunil Simon
\affiliations
Department of CSE, IIT Kanpur, Kanpur, India\\
\emails
\{smassand, simon\}@cse.iitk.ac.in,
}
\begin{document}

\maketitle

\begin{abstract}

We study the problem of allocating indivisible objects to a set of
rational agents where each agent's final utility depends on the
intrinsic valuation of the allocated item as well as the allocation
within the agent's local neighbourhood. We specify agents' local
neighbourhood in terms of a weighted graph. This extends the model of
one-sided markets to incorporate neighbourhood externalities. We
consider the solution concept of stability and show that, unlike in the
case of one-sided markets, stable allocations may not always
exist. When the underlying local neighbourhood graph is symmetric, a
2-stable allocation is guaranteed to exist and any decentralised
mechanism where pairs of rational players agree to exchange objects
terminates in such an allocation. We show that computing a 2-stable
allocation is PLS-complete and further identify subclasses which are
tractable. In the case of asymmetric neighbourhood structures, we show
that it is NP-complete to check if a 2-stable allocation exists. We then
identify structural restrictions where stable allocations always exist
and can be computed efficiently.
Finally, we study the notion of envy-freeness in this framework.



\end{abstract}

\maketitle

\section{Introduction}
\label{sec:intro}
Allocation of indivisible items to rational agents is a central
problem which has been studied in economics and computer science. The
problem arises in a wide range of applications: including resource
allocation in distributed systems, spectrum allocation, kidney
exchange programs and so on. Stability and fairness are two
influential concepts which are studied in the context of allocation
problems. Various notions of fairness such as proportionality,
envy-freeness and maximin share guarantee \cite{Bud11} have been
studied in the general setting of resource
allocation \cite{BCM16,beynier2018local}. On the other hand, stability
as a solution concept has received greater attention in the context of
matchings \cite{GI89,roth1990random}.

In this paper, we introduce a model which is closely aligned to the
one-sided market, also known as the Shapley-Scarf housing
market \cite{SS74}. This is a fundamental and well studied framework
used to model an exchange economy. It consists of a set of rational
agents each owning a house along with a strict preference ordering
over all the houses present in the market. A stable allocation
corresponds to an assignment of houses to agents 
such that no coalition of agents can improve by internal redistribution.
An important question is whether stable
allocations always exist in such markets and whether a finite sequence
of exchanges can converge to such an allocation. It was shown that
using a simple and efficient procedure often termed as Gale's Top
Trading Cycle, one can always find an allocation that is stable. The
allocation constructed in this manner also satisfies various desirable
properties like it being strategy-proof and Pareto optimal.

In one-sided markets, agents' preferences are typically assumed to be
strict and are not dependent on the allocation received by other
agents. However, in many practical situations, an agent's utility for
an allocation is dependent on both their intrinsic valuation for the
allocated item as well as the items which are allocated to the agents
within their neighbourhood.
We consider a framework to model the allocation problem where agents
have cardinal utilities associated with each allocation. The agents
are not allowed to make monetary payments to each other, thus the
utilities are non-transferable. For each agent $i$, the final utility
depends on two components: the intrinsic value associated with the
item assigned to agent $i$ and the items assigned to the agents within
the neighbourhood of agent $i$. We model the agents' neighbourhood
structure as a directed weighted graph in which the nodes correspond
to agents. For each agent, the incoming edges along with the
associated weights denote the quantitative ``influence'' that
neighbours have on the agent. In our model, this influence is also
item specific. That is, each item $a$ is associated with a subset of
compatible items $a_{\lambda}$. The externality for an agent who is
allocated an item $a$ depends on the influence from agents in its
neighbourhood who are assigned items from $a_{\lambda}$. 

Such a framework naturally captures various instances of allocation
problems that arise in practice. For instance, consider a housing
allocation problem, where agents have some intrinsic valuation
over the houses.
In addition, suppose some of these agents are friends and would prefer
to reside in houses near each other. In such a scenario, the final
utility associated with an allocation depends on both the valuation of
the house as well as the owners of the neighbouring houses. Similarly,
consider an organization which needs to decide project allocation for
its incoming employees, with each project being mentored by senior
employees. An employee's preference for a project would rely on both,
the nature of the project and the mentors assigned for the project.

Certain structural aspects are explicated in the model and these
choices are influenced by the various frameworks often used to analyse
strategic interaction. For instance, graphical dependency structures
are known to play an important role in the analysis of strategic
interaction in terms of the existence of stable outcomes and their
computational properties. In game theory, such structures are studied
in graphical games \cite{KLS01}. While we do incorporate neighbourhood
externalities in our model, these are required to satisfy a pairwise
seperability constraint which specifies the influence of each agent in
the neighbourhood.  The eventual externality is additive over these
pairwise values. In game theory, a similar constraint on the utility
functions give rise to a well studied class of games called
\textit{polymatrix games} \cite{Jan68}. Imposing pairwise
separability is a natural restriction which typically helps achieve
better computational properties and also helps in showing stronger
lower bounds \cite{CD11,DFS14,RS15}.

Markets form a fundamental model of
exchange economy and allocations of indivisible items in such markets
have been studied extensively both in terms of stability and fairness
of allocation. 
In the context of two sided markets, the influence of neighbourhood
relations has been studied in \cite{AV09,Hoe13}.\cite{Hoe13} proposes
a framework where players are modelled as nodes in a social network
and explore possible matches only among the players in the current
neighbourhood. \cite{AV09} models a job market as a 2-sided matching
market and studies the importance of social contacts in terms of
stable allocation. \cite{ABH13} studies stable matchings in the
presence of social contexts where players are positioned in an
underlying weighted directed graph and the weights on the edges denote
the rewards for being matched to each agent. Each agent's final
utility for a matching depends on both the agent's individual reward
as well as the rewards received by the neighbours under the matching.

For one-sided markets, \cite{FD-graph17} introduces a framework that
models constraints based on the dependency relation between items with
additive utility functions.  The dependency is encoded as a graph on
the item set and an allocation is required to satisfy the constraint
that the items allocated to each agent forms a sub-graph of the item
graph. 
The paper looks at additive utility functions and investigates the
complexity of finding optimal allocations in terms of fairness,
envy-freeness and maximin share guarantee.
\cite{LT18} further studies the computational properties of this model for maximin share allocations in the specific case when the item
graph forms a \cycle{}. Approximations of envy-freeness in terms of
EF-1 is considered in \cite{EF1-graph}. \cite{CEM17} studies
convergence properties for fair allocation of dynamics involving
exchange of items by rational agents in the presence of a social
network structure on the players. In this line of work listed above,
fairness is the main solution concept that is studied. In our model,
rather than viewing the social network structure as imposing
restrictions on the set of feasible allocations, we interpret the
neighbourhood structure as specifying agents' externalities which
eventually contributes towards the final utility. In such a setting,
the dynamics involving exchange of items are quite natural and
therefore, stability of an allocation is an important consideration
that we study.

Decentralised swap dynamics where pairs of agents exchange items or
services and the optimality of allocations is studied
in \cite{maudetSwapPower15}. \cite{gourves2017object} examines the
influence of a social network structure on players in terms of these
exchanges and optimal
allocations. \cite{todo15exchange,todocorecomplexity15} look at
unrestricted exchange dynamics in the context of lexicographic
preference ordering, with \cite{todo15exchange} focussing on player
preferences which have a common structure, implying an asymmetry in
the objects. As part of our work, we consider such swap dynamics in
the context of a neighbourhood structure which influences a player's
utility.

The presence of externalities has been studied in the
literature, mainly in the context of fair and efficient
allocations. \cite{BPZ13} studies the problem of fair division of
divisible heterogeneous resources in the presence of externalities;
where the agent's utility depends on the allocation to others. A
recent paper \cite{GSS18} considers a model similar to the one we
study in which they analyse fair allocation in terms of the maximin
share guarantee. \cite{lesca2018service} incorporates a restricted
notion of externality in the context of a service exchange model and
analyses the complexity of computing efficient allocation.

Another related line of work involves agent based modelling of
Schelling segregation. \cite{CLM18} proposed a game theoretic
framework to analyse Schelling segregation in which players are
partitioned into two types and are assigned nodes in an underlying
graph. Each agent's utility depends on the number of neighbours of
same type and the possibility of the agent having a favourite node in
the graph. \cite{EGI19} studies the existence of Nash equilibrium in
an extension of this model with arbitrary types of agents where
agents' unilateral deviation involve changing their location to an
unoccupied node in the underlying graph. The notion of swap stability
for Schelling segregation, which is closer to the notion of stability
that we study in this paper, is considered in \cite{AEG19}. While the
framework is related, the results are not directly applicable in the
setting of one-sided markets.

\section{The Model}
\label{sec:model}
We introduce the resource allocation problem that we study in this
paper which we call the \textit{graphical matching problem}.
Let $N = \{1,2,\ldots,n\}$ be a finite set of agents (or players) and $A$ be a finite set
of items such that $|N| = |A|$. The local neighbourhood structure for
the set of agents, referred to as the \textit{player graph} is given
by a weighted directed graph $G = (N, \tau, w)$ where $\tau \subseteq
N \times N$ and $w$ is a function that associates with each edge
$(i,j) \in \tau$, a weight $w_{i,j} \in \mathbb{R}$. We say that the
player graph $G$ is \textit{symmetric} if for each pair of players
$i,j$, $(i,j) \in \tau$ implies $(j,i) \in \tau$ and
$w_{i,j}=w_{j,i}$. We say that the player graph $G$
is \textit{unweighted} if for all $(i,j) \in \tau$, $w_{i,j}=1$.

The dependency structure associated with the items, also referred to
as the \textit{item graph}, is specified by an undirected graph $H =
(A, \lambda)$ where $\lambda \subseteq A \times A$. We assume that
both $G$ and $H$ do not have self loops. For a player $i \in N$, let
$i_{\tau} = \{j\mid(j,i) \in \tau\}$ denote the set of all neighbours
of $i$ in $G$. For an item $a \in A$, let $a_{\lambda}
= \{b\mid(a,b) \in \lambda\}$ denote the set of all items connected to
$a$ according to the dependence structure $H$.

Each player has a valuation function $v_i: A \rightarrow
\mathbb{R}_{\geq 0}$ that specifies the initial utility that the
player associates with each item. An allocation $\pi: N \to A$ is a
bijection that assigns items to players. In other words, an allocation
$\pi$ assigns exactly one item to each player. Let $\Pi$ denote the
set of all allocations. For a player $i \in N$ and an allocation
$\pi \in \Pi$, let $N(i,\pi)=\{j \in
N \mid \pi(j) \in (\pi(i))_\lambda\}$. For $i,j \in N$, we say that $i$
is connected to $j$ in $\pi$ if $j \in N(i,\pi)$. Let
$d_i(\pi)=N(i,\pi) \cap i_\tau$.
The \textit{utility} of player $i$ for allocation $\pi$ is then given
by: $u_i(\pi)=v_i(\pi(i))+r_i(\pi)$ where $r_i(\pi)=\Sigma_{j \in
d_i(\pi)}w_{j,i}$. In other words, the utility of player $i$ on an
allocation $\pi$ depends on his valuation for $\pi(i)$ as well as the
agents that are assigned items in the local neighbourhood of $\pi(i)$ ($N(i, \pi)$).
To simplify notation, we often use $v_i(\pi)$ to denote
$v_i(\pi(i))$.  The social welfare for an allocation $\pi$ is defined
as $\sw(\pi)=\Sigma_{i \in N} u_i(\pi)$.  We say that the \name{} has
uniform valuation if for all $i \in N$ and $a \in A$, $v_i(a)=c$ for
some constant $c$. 
In this case, the utilities of players are determined solely by their
local neighbourhood structure. An instance of the \textit{graphical
matching problem} is specified by the tuple
$M=(\pgraph,\igraph,(v_i)_{i \in N})$ where $\pgraph=(N,\tau,w)$ and
$\igraph=(A,\lambda)$.  We say that 
$M=(\pgraph,\igraph,(v_i)_{i \in N})$ has symmetric neighbourhood if
$G$ is symmetric. $M$ is unweighted if $G$ is unweighted.


\textit{Stability} in an allocation captures the property that
players do not have any incentive to exchange goods and deviate to a
new allocation. Given an allocation $\pi$, we call a pair of players
$i,j \in N$ to be a blocking pair in $\pi$, if there exists another
allocation $\pi'$ where $\pi'(i) = \pi(j)$, $\pi'(j) = \pi(i)$ and
$\pi'(k) = \pi(k)$ for all $k \in N-\{i,j\}$ such that $u_i(\pi) <
u_i(\pi')$ and $u_j(\pi) < u_j(\pi')$. In other words, the players $i,j$ have an incentive to deviate from the current allocation by
exchanging their items. We say an allocation $\pi$ is
\textit{2-stable} if there is no blocking pair in $\pi$. 

Given an allocation $\pi$ along with a blocking pair $(i,j)$, we say
that $\pi'$ is a \textit{resolution} of the blocking pair if
$\pi'(i)=\pi(j)$ and $\pi'(j)=\pi(i)$. We denote this by
$\pi \to_{i,j} \pi'$. An
\textit{improvement path} is a maximal sequence of allocations
$\pi^0 \pi^1 \pi^2 \ldots$ such that for all $k \geq 1$,
$\pi^{k-1} \to_{i,j} \pi^k$ for some pair of players $i,j \in N$. It
is easy to observe that the existence of a finite improvement path
implies the existence of a 2-stable allocation.

For an allocation $\pi$, we call $X \subseteq N$ a blocking coalition,
if there exists $\pi'$ and a bijection $\mu:X \to X$ such that for all
$i \in X$, $\pi'(i) \neq \pi(i)$, $\pi'(i)=\pi(\mu(i))$ and $u_i(\pi')
> u_i(\pi)$. We say an allocation $\pi$ is \textit{core stable} if
there is no blocking coalition in $\pi$.

\begin{example}
\label{ex:ex1}
{\rm
Let the set of players $N=\{1,\ldots,6\}$ and $A=\{A,B,C,D,E,F\}$ with $v_i(a) = 0$ for all $i \in N$ and $a \in A$. Let
$\pgraph=(N,\tau,w)$ be defined as follows: $\tau =\{ (1,2), (2,3),
(3,1), (4,1), (5,2), (6,3)\}$ and $w_{i,j}=1$ for all
$(i,j) \in \tau$. Let the item graph be the structure given in
Figure~\ref{fig:no-2stable}. Consider the allocation $\pi$ where
$\pi(1)=A, \pi(2)=B, \pi(3)=C$ and $\pi(4)=D, \pi(5)=E, \pi(6)=F$,
then $u_1(\pi)=u_2(\pi)=u_3(\pi)=1$ and
$u_4(\pi)=u_5(\pi)=u_6(\pi)=0$. Now suppose the pair $(1,4)$ exchanges
items and let $\pi'$ be the resulting allocation. That is, $\pi'(4)=A$
and $\pi'(1)=D$. Then, $u_1(\pi')=0$.  }
\end{example}

The classical \textit{one-sided matching market} can be viewed as a special
case of the model given above.  Consider an instance of the one-sided
market consisting of two finite sets $X$ and $Y$ where $|X| =
|Y|$. Each $x \in X$ has a valuation function $z_x: Y \to \mathbb{R}$
which specifies the preference ordering over outcomes in $Y$. An
allocation $\pi$ is a bijection $\pi:X \to Y$. The notion of a
2-stable and core-stable allocation remains the same as defined
earlier; we view the valuation function $z_x$ as specifying the final
utility of player $x \in X$. Given an instance $M$ of a one-sided
market, we can construct an instance $M'$ of the \name{} such that the set
of stable allocations in $M$ precisely constitutes the set of stable
allocations in $M'$. We take $N=X$, $A=Y$ and
$v_i(\pi(i))=z_i(\pi(i))$ with the player graph $G=(N,\tau,w)$ where $\tau
= \emptyset$.

In one-sided markets, a 2-stable allocation always exists and it can
be computed using the Top Trading Cycle procedure. On the other hand,
the presence of neighbourhood externalities
results in dynamics that is more complex and significantly different
in terms of players' behavioural aspects. The example given below
shows that in the \name{}, a 2-stable outcome need not always exist
even in the simple case when the instance has uniform valuation and
when the underlying neighbourhood structure is unweighted.

\begin{figure}
\centering
\tikzstyle{agent}=[circle,draw=black!80,thick, minimum size=1.8em,scale=0.8]
\scalebox{0.92}{
\begin{tikzpicture}[auto,>=latex',on grid]
\newdimen\R
\R=1cm
\newcommand{\llab}[1]{{\small $\{#1\}$}}
\draw (90: \R) node[agent] (1) {A};
\draw (90-120: \R) node[agent] (2) {B};
\draw (90-240: \R) node[agent] (3) {C};

\node[agent,right = 2.5 of 1] (4) {D};
\node[agent,right = 1 of 2] (6) {F};
\node[agent,right = 2.5 of 2] (5) {E};

    \draw[-] (1) -- (2) node [midway,left]{\small{}};
    \draw[-] (3) -- (1) node [midway,right]{\small{}};
    \draw[-] (2) -- (3) node [midway,above]{\small{}};

    \draw (4) -- (5) node [midway,left]{\small{}};
    \draw (6) -- (4) node [midway,right]{\small{}};
    \draw (5) -- (6) node [midway,above]{\small{}};

\end{tikzpicture}
}
\caption{\label{fig:no-2stable} An item graph}
\end{figure}

\begin{example}
\label{ex:no-2stable}
{\rm Let the set of players $N=\{1,\ldots,6\}$ and
$A=\{A,B,C,D,E,F\}$. Suppose the player graph $\pgraph=(N,\tau,w)$
forms a \cycle{} on $N$ consisting of the edges $(6,1) \in \tau$
and $(i,i+1) \in \tau$ for all $i:1\leq i \leq 5$. For every edge
$(i,j) \in \tau$, let $w_{i,j}=c$ for some positive constant $c$. We
also assume uniform valuation $v_i(a)=0$ for all $i \in N$ and $a \in
A$. Let the item graph be the structure given in
Figure~\ref{fig:no-2stable}.  Since the item graph consists of two
3-cliques, it is sufficient to divide players into groups of 3, assign
them to any of the 3-cliques. It can be verified that this instance
does not have a 2-stable allocation. 
}
\end{example}

\section{Stability in Symmetric Neighbourhood}
\label{sec:symmetric}

Given Example \ref{ex:no-2stable}, a natural question is to identify
restricted classes of the \name{} where stable outcomes are guaranteed
to exist. We show that when the underlying player graph is symmetric,
a 2-stable outcome always exists. In general, it is PLS-hard to
compute such a stable allocation. We also identify restrictions under
which stable allocations can be computed efficiently.

\begin{restatable}{theorem}{thmSymPotential}
\label{thm:sym-potential}
Every improvement path in a symmetric \name{} is finite. Thus, a
2-stable allocation always exists.
\end{restatable}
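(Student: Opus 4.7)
The plan is to exhibit an ordinal potential function $\Phi : \Pi \to \mathbb{R}$ that strictly increases with every blocking-pair resolution $\pi \to_{i,j} \pi'$. Since $\Pi$ is finite, any strictly monotone sequence of allocations under this potential must be finite, so every improvement path terminates, and its terminal allocation is by definition 2-stable.

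I would take
\[
\Phi(\pi) \;=\; \sum_{i \in N} v_i(\pi(i)) \;+\; \tfrac{1}{2}\sum_{i \in N} r_i(\pi).
\]
By symmetry of the player graph and of the (undirected) item graph, the second sum rewrites as $\sum_{\{i,j\} \in \tau} w_{i,j}\,\mathbf{1}[\pi(j) \in (\pi(i))_\lambda]$, where the sum is over unordered edges. The key claim to verify is that whenever $\pi \to_{i,j} \pi'$,
\[
\Phi(\pi') - \Phi(\pi) \;=\; \bigl(u_i(\pi') - u_i(\pi)\bigr) + \bigl(u_j(\pi') - u_j(\pi)\bigr),
\]
which is strictly positive by the definition of a blocking pair.

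To establish this identity, I would compute $\Phi(\pi') - \Phi(\pi)$ directly: the $v$-part contributes precisely the $v$-changes in $u_i$ and $u_j$, so it suffices to show
\[
\sum_{k \neq i,j}\bigl(r_k(\pi') - r_k(\pi)\bigr) \;=\; \bigl(r_i(\pi') - r_i(\pi)\bigr) + \bigl(r_j(\pi') - r_j(\pi)\bigr).
\]
For any $k \notin \{i,j\}$ with $k \in i_\tau$, the contribution to $r_k(\pi')-r_k(\pi)$ induced by the changed allocation at $i$ equals $w_{i,k}\bigl(\mathbf{1}[\pi(j) \in (\pi(k))_\lambda] - \mathbf{1}[\pi(i) \in (\pi(k))_\lambda]\bigr)$, which, by symmetry $w_{i,k}=w_{k,i}$ and the undirected nature of $H$, exactly mirrors the corresponding summand in $r_i(\pi')-r_i(\pi)$ coming from $k \in i_\tau$. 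An identical pairing handles the $j$-side, giving the identity above. The two potentially bothersome diagonal terms (the $k=j$ summand in $r_i$ and the $k=i$ summand in $r_j$, present when $(i,j) \in \tau$) vanish since $\pi(i) \in (\pi(j))_\lambda \Leftrightarrow \pi(j) \in (\pi(i))_\lambda$ by symmetry of $H$.

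The main obstacle is precisely this bookkeeping identity: the definition of $r_i$ sums over in-neighbours with weights $w_{k,i}$, so pairing the changes across different players is delicate and uses both forms of symmetry (of $G$ and of $H$) in an essential way. Without symmetry of $G$ one could not equate $w_{k,i}$ with $w_{i,k}$; without symmetry of $H$ the mirroring between $\mathbf{1}[\pi(k)\in (\pi(i))_\lambda]$ and $\mathbf{1}[\pi(i)\in (\pi(k))_\lambda]$ fails, and indeed Example~\ref{ex:no-2stable} already shows that no such potential can exist in general. Once the identity is in hand, the strict monotonicity of $\Phi$ on the finite set $\Pi$ immediately yields finiteness of every improvement path and hence the existence of a 2-stable allocation.
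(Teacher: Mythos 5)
Your proposal is correct and is essentially the paper's own argument: your potential $\Phi$ is exactly one half of the paper's potential $\phi(\pi)=\sum_{i\in N}(v_i(\pi)+u_i(\pi))=\sum_{i\in N}(2v_i(\pi)+r_i(\pi))$, and the key identity $\Phi(\pi')-\Phi(\pi)=(u_i(\pi')-u_i(\pi))+(u_j(\pi')-u_j(\pi))$ is the same monotonicity computation the paper performs, with your bookkeeping (pairing $w_{k,i}$ with $w_{i,k}$ via symmetry of $G$ and mirroring the indicator via symmetry of $H$) making explicit what the paper leaves implicit. No gaps; the conclusion follows as you state from finiteness of $\Pi$.
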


\begin{proof}[Proof sketch]
We can argue that the following function acts as a potential
function. 
$\phi(\pi)= \sum_{i \in N} (v_i(\pi) + u_i(\pi))$.
\end{proof}


\begin{corollary}
In a symmetric \name{} with uniform valuation where the underlying player graph
is unweighted, we can compute a 2-stable allocation in polynomial
time.
\end{corollary}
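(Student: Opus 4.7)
The plan is to exploit the potential function from Theorem~\ref{thm:sym-potential} and observe that under the additional restrictions of the corollary, this potential becomes an integer-valued function of polynomially bounded range, so the natural swap dynamics converges in polynomially many steps.

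First I would instantiate $\phi(\pi) = \sum_{i \in N} (v_i(\pi) + u_i(\pi))$ in the uniform-valuation, unweighted setting. Uniform valuation makes $\sum_{i \in N} v_i(\pi) = nc$ a constant independent of $\pi$, and expanding $u_i(\pi) = v_i(\pi) + r_i(\pi)$ reduces $\phi(\pi)$ to $2nc + \sum_{i \in N} r_i(\pi)$. Because the player graph is unweighted, each $r_i(\pi) = |N(i,\pi) \cap i_\tau|$ is a nonnegative integer bounded by $|i_\tau|$, so $\sum_{i \in N} r_i(\pi)$ is a nonnegative integer bounded above by $|\tau| \le n(n-1)$.

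Combining this with Theorem~\ref{thm:sym-potential}, every resolution of a blocking pair strictly increases the integer quantity $\phi(\pi) - 2nc \in \{0, 1, \ldots, |\tau|\}$, so any improvement path starting from an arbitrary allocation reaches a 2-stable allocation after at most $|\tau| = O(n^2)$ swaps. The algorithm then is: start with any allocation, repeatedly scan the $\binom{n}{2}$ player pairs and resolve a blocking pair whenever one is found. Checking whether a pair is blocking reduces to recomputing two utilities, which is polynomial, so the total running time is polynomial in $n$.

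The argument is essentially routine once Theorem~\ref{thm:sym-potential} is available, so there is no real obstacle; the only point requiring any care is confirming that the potential truly takes only polynomially many values in this restricted setting, which the combination of uniform valuations (killing the $\sum_i v_i$ term) and the unweighted assumption (making $r_i$ integer and bounded by the degree) delivers cleanly.
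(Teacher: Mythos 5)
Your proposal is correct and follows the paper's own argument: instantiate the potential $\phi$ from Theorem~\ref{thm:sym-potential}, observe that uniform valuations and unit weights confine it to an integer range of width $O(|\tau|)$, and conclude that blocking-pair resolution terminates in polynomially many steps. The only cosmetic difference is the step-size bound (you use the trivial increase of at least $1$; the paper notes it is at least $2$), which does not affect the conclusion.
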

\begin{proof}[Proof sketch]
Suppose $v_i(a)=c$ for all $i \in N$ and $a \in A$. Then, $\phi$ is
bounded above by $2nc + 2|\tau|$ and below by $2nc$; in each
resolution step the value increases by at least 2.
\end{proof}

If we consider the player graph to have weighted edges, then computing
a 2-stable outcome is PLS-complete already for the symmetric \name{}
with uniform valuation and non-negative edge weights.

\begin{restatable}{theorem}{thmPlsComplete}
\label{thm:pls-complete}
Finding a 2-stable allocation in a symmetric \name{} with uniform
valuation in which the edge weights in the underlying player graph are
non-negative is PLS-complete.
\end{restatable}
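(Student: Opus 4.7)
The plan is to show membership via Theorem~\ref{thm:sym-potential} and hardness via a reduction from \textsc{Max-Cut} under the single-vertex \textsc{Flip} neighbourhood, a canonical PLS-complete problem that remains PLS-hard even with non-negative edge weights. Membership is immediate: the potential $\phi$ from Theorem~\ref{thm:sym-potential} is polynomial-time computable and strictly increases along every resolution of a blocking pair, while a blocking pair (if any) can be produced in polynomial time by iterating over the $\binom{|N|}{2}$ player pairs. These three ingredients verify the PLS conditions.

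For hardness, given a \textsc{Max-Cut} instance $(V, E, w)$ with $w \geq 0$, the construction introduces two players $p_v, q_v$ and two items $a_v, b_v$ for each $v \in V$, sets the uniform valuation to $0$, and equips the instance with two classes of graph edges. In the item graph $\igraph$ I would, for every $v \in V$, place a matching edge between $a_v$ and $b_v$ (an ``anchor'' adjacency), and for every $(u, v) \in E$, place the cross edges between $a_u, b_v$ and between $a_v, b_u$ (``cut'' adjacencies). In the symmetric player graph $\pgraph$ I would place an anchor edge $(p_v, q_v)$ of large weight $W$ for every $v$, together with edges $(p_u, p_v)$ and $(q_u, q_v)$ of weight $w_{uv}$ for every $(u, v) \in E$. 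A cut $\sigma$ is then encoded by the \emph{clean} allocation $\pi_\sigma$ with $p_v \mapsto a_v$, $q_v \mapsto b_v$ when $\sigma(v) = 0$, and the two items interchanged when $\sigma(v) = 1$.

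A direct computation yields $u_{p_v}(\pi_\sigma) = u_{q_v}(\pi_\sigma) = W + c(v, \sigma)$, where $c(v, \sigma)$ is $v$'s contribution to the cut weight. Hence the matched-pair swap $(p_v, q_v)$ in $\pi_\sigma$ is a blocking pair precisely when flipping $v$ strictly improves the cut, capturing the \textsc{Flip} neighbourhood. Choosing $W > \sum_{(u,v) \in E} w_{uv}$ rules out cross-vertex swaps as blocking pairs in clean allocations: any such swap either fails to strictly benefit at least one of the two involved players, or breaks an anchor and incurs a loss of $W$ that cannot be recovered from the bounded cut-edge contributions.

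The main obstacle is to show that every 2-stable allocation is in fact clean, so that the map $g$ sending a 2-stable $\pi$ to the cut determined by $\sigma(v) = 0 \Leftrightarrow \pi(p_v) \in \{a_w : w \in V\}$ qualifies as a PLS reduction. The subtlety is that the cross edges introduced for the cut encoding allow the anchor of $\{p_v, q_v\}$ to fire in spurious ways, for instance when $p_v$ holds $a_u$ and $q_v$ holds $b_v$ with $(u, v) \in E$, so cleanliness cannot be read off from anchor firings alone. The plan would be a case analysis on the holders of the items $a_v$ and $b_v$ in any non-clean allocation, identifying in each case a pair of players whose swap re-anchors the configuration and, thanks to $W$ being large, strictly benefits both sides, contradicting 2-stability. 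Once this is established, the bijection between clean 2-stable allocations and local optima of \textsc{Max-Cut} completes the reduction.
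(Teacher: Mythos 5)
Your membership argument and the overall shape of the hardness reduction (duplicate each vertex into a heavy ``anchored'' pair, copy the Max-Cut edges onto the blue and red halves, encode a cut by which side of the item structure each player sits on) match the paper's. The gap is in your item graph. You take it to be sparse --- a perfect matching $\{(a_v,b_v)\}$ plus cross edges mirroring $E$ --- and the step you yourself flag as the main obstacle, that every 2-stable allocation is clean, is false for that choice. The problem is a permutation degree of freedom: the pair $(p_v,q_v)$ can anchor on \emph{any} matched item pair $(a_x,b_x)$, not just its own, and nothing forces the induced relabelling of vertices to respect $E$. Concretely, take $V=\{1,2,3,4\}$, $E=\{(1,2),(3,4)\}$ with unit weights, and the allocation $p_1\mapsto a_1$, $q_1\mapsto b_1$, $p_2\mapsto a_3$, $q_2\mapsto b_3$, $p_3\mapsto a_2$, $q_3\mapsto b_2$, $p_4\mapsto a_4$, $q_4\mapsto b_4$. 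Every player has utility exactly $W$ (all anchors fire, no cross edge fires). One checks that no swap can strictly improve \emph{both} participants: for $p_1$ to gain it would have to hold an item adjacent both to $b_1$ (to keep its anchor) and to $a_3$ (to reach its only cross neighbour $p_2$), and the neighbourhoods $\{a_1,a_2\}$ of $b_1$ and $\{b_3,b_4\}$ of $a_3$ are disjoint; the swap with $q_1$ or with $p_3$ leaves $p_1$ at $W$, and every other swap breaks its anchor. So this non-clean allocation is 2-stable, yet your map $g$ sends it to the all-zero cut of weight $0$, which is not a local optimum of Max-Cut (flipping vertex $1$ gains $1$). Hence the reduction, as designed, does not map local optima to local optima.

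The paper avoids exactly this by taking the item graph to be the \emph{complete bipartite} graph on two sides of size $|V|$: then item identity within a side is irrelevant, adjacency is equivalent to ``opposite sides,'' the heavy anchor forces $v_{\txtblue}$ and $v_{\txtred}$ onto opposite sides in any local optimum (otherwise some other pair shares the other side and a cross swap restores both anchors, gaining $w_{\mathit{max}}$), and a cut edge $(u,v)$ fires precisely when $u$ and $v$ are assigned to different sides. That collapses the spurious permuted optima and also yields the tight reduction needed for the exponential-improvement-path corollary. If you replace your sparse item graph by the complete bipartite one, the rest of your argument goes through essentially as in the paper.
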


\begin{proof}[Proof sketch]
Without loss of generality we assume that $v_i(a)=0$ for all $i \in N$
and $a \in A$. The potential function defined in the proof of
Theorem~\ref{thm:sym-potential} essentially reduces to the social
welfare, i.e., $\phi(\pi)=\Sigma_{i \in N} u_i(\pi)=\sw(\pi)$. It can
be verified that computing a 2-stable allocation is in PLS. 
To show PLS-hardness, we give a \emph{tight reduction} from the
max-cut problem with FLIP neighbourhood \cite{schaffer1991PLS}.

Let $Q=(V,E,\{z_e\}_{e \in E})$ be an instance of the max-cut problem where $V$
is the set of vertices, $E \subseteq V \times V$ is the set of edges and
$\{z_e\}_{e \in E}$ is the set of non-negative edge weights.
We construct an instance of the symmetric \name{} in which the
underlying player graph has non-negative edge weights as follows. Let
the player graph $G = (N, \tau)$, where $N=\{v_{\txtblue} \mid v \in
V\} \cup \{ v_{\txtred} \mid v \in V\}$. That is, for every vertex $v
\in V$, there are two vertices $v_{\txtblue}$ and $v_{\txtred}$ in
$N$. Thus $|N|=2*|V|$. For each edge $(u,v) \in E$ we add two edges in
$\tau$: $(u_{\txtblue},v_{\txtblue}) \in \tau$ and
$(u_{\txtred},v_{\txtred})\in \tau$ with $w_{u_{\txtblue},
v_{\txtblue}} = w_{u_{\txtred}, v_{\txtred}} = z_{(u,v)}$.  Let
$w_{\mathit{max}} = |N|^2*(\max_{e \in E}z_e - \min_{e \in E}z_e)$.
For every $v \in V$, we also add the edge $(v_{\txtblue},
v_{\txtred}) \in \tau$ with $
w_{v_{\txtblue}, v_{\txtred}} = w_{\mathit{max}}$.
The item graph is the complete bipartite graph $H = (A, \lambda)$
where each partition consists of $|V|$ vertices. In other words, let
the two partitions be $A_1$ and $A_2$, with vertices $A_1(1),
A_1(2),\ldots,A_1(|V|)$ and $A_2(1), A_2(2),\ldots,A_2(|V|)$). For an
arbitrary cut $(V_1, \obar{V_1})$, we can construct an
allocation $\pi$ such that $\sw(\pi) =
4*\mathit{cutWeight}(V_1, \obar{V_1}) + 2|V|*w_{max}$, that proves
the result.
\end{proof}

Since Theorem~\ref{thm:pls-complete} gives a tight PLS reduction from
the max-cut problem, we have the following corollary.

\begin{restatable}{corollary}{corExpPath}
The standard local search algorithm takes exponential time in the
worst case for the symmetric \name{}.
\end{restatable}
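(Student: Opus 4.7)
The plan is to exploit the fact that Theorem~\ref{thm:pls-complete} provides a \emph{tight} PLS reduction from the max-cut problem with the FLIP neighbourhood. By definition, a tight PLS reduction preserves the structure of the transition graph of the standard local search procedure: any improvement path in the source instance lifts, step-for-step, to an improvement path in the target instance (up to the fixed mappings of solutions and of improvement moves that witness tightness).

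First, I would invoke the classical lower bound of Sch\"affer and Yannakakis~\cite{schaffer1991PLS}, which exhibits a family of max-cut instances together with starting cuts from which every improvement path under the FLIP neighbourhood has length exponential in the number of vertices. Next, I would map each such instance $Q=(V,E,\{z_e\})$ through the construction in the proof of Theorem~\ref{thm:pls-complete} to obtain a symmetric graphical matching instance $M$ on $2|V|$ players, together with a starting allocation $\pi^0$ obtained by applying the solution-mapping of the reduction to the bad starting cut. By the tightness of the reduction, improvement moves (vertex flips) on the source side correspond to local-search moves (blocking-pair swaps) on the target side that strictly increase the potential $\phi$ of Theorem~\ref{thm:sym-potential}, and conversely every swap in $M$ either corresponds to a flip in $Q$ or can be simulated by one. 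Hence every local-search trajectory of $M$ starting from $\pi^0$ has length at least that of the corresponding trajectory in $Q$, which is exponential in $|V|=\Theta(|N|)$.

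The main obstacle I expect is verifying the correspondence between swaps in $M$ and flips in $Q$ with enough care to rule out ``shortcut'' improvement moves in $M$ that are not reflected in $Q$. Concretely, one must argue that the heavy edges of weight $w_{\mathit{max}}$ between $v_{\txtblue}$ and $v_{\txtred}$ force any profitable swap in $M$ to exchange an item of $A_1$ with an item of $A_2$ between the two copies of the \emph{same} vertex $v$, so every blocking-pair resolution in $M$ faithfully encodes a single vertex flip in $Q$. Once this structural lemma is in place, the length of the local search on $M$ is bounded below by that of the local search on $Q$, and the exponential lower bound transfers directly.
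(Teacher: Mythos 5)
Your proposal is correct and follows essentially the same route as the paper: the paper simply cites the exponential worst-case lower bound for max-cut under FLIP (Theorem~5.15 of Sch\"affer and Yannakakis) and the general fact that tight PLS reductions transfer this lower bound (their Lemma~3.3), whereas you unpack that transfer argument by hand. The ``obstacle'' you flag --- verifying that blocking-pair swaps faithfully track vertex flips via the heavy $w_{\mathit{max}}$ edges --- is exactly the content of the tightness argument already established in the proof of Theorem~\ref{thm:pls-complete}, so nothing further is needed here.
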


The above result shows that in symmetric \name{}s with uniform
valuations, while a potential function exists, the bound on the
function can be exponential in the encoding of the instance. A natural
question is to ask if the utilities can be replaced by some
bounded integer function for which the local search has the exact same
behaviour. We now show that when the degree of the player graph is
bounded by two, this is indeed possible, resulting in a polynomial
time procedure.

%

\begin{restatable}{theorem}{thmSymCycle}
For the symmetric \name{} with uniform valuation, if the underlying
player graph has vertices of degree at most two, a 2-stable allocation
can be computed in polynomial time.
\end{restatable}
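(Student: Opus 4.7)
The plan is to reduce to Theorem~\ref{thm:sym-potential} by rescaling the edge weights so that the resulting potential is polynomially bounded. Assume without loss of generality that $v_i(a)=0$, so $u_i(\pi)=r_i(\pi)=\Sigma_{k \in d_i(\pi)} w_{k,i}$. Since each player has at most two neighbours, $r_i(\pi)$ ranges over at most the four subset sums $\{0,\, w_{k_1,i},\, w_{k_2,i},\, w_{k_1,i}+w_{k_2,i}\}$, and a swap $\pi \to_{i,j} \pi'$ is improving iff $r_i$ and $r_j$ both strictly increase when their respective subsets change.

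The central observation I would establish is that the ordering of this four-tuple at any player is determined entirely by (a) the signs of the two incident weights and (b) the comparison of their absolute values. A case analysis of the six pairwise comparisons inside the tuple confirms this: $0$ vs $w_\ell$ depends on $\mathrm{sign}(w_\ell)$; $w_\ell$ vs $w_1+w_2$ depends on $\mathrm{sign}(w_{3-\ell})$; $w_1$ vs $w_2$ reduces to signs (or, if signs agree, to $|w_1|$ vs $|w_2|$); and $0$ vs $w_1+w_2$ reduces to signs (or, if signs differ, to $|w_1|$ vs $|w_2|$).

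I would then define the rescaling as follows. Let $\rho(e)\in\{0,1,\ldots,|\tau|\}$ be the rank of $|w_e|$ in the globally sorted list of distinct magnitudes (ties getting the same rank, and $\rho(e)=0$ iff $w_e=0$), and set $\tilde{w}_e=\mathrm{sign}(w_e)\cdot\rho(e)$. Being a globally monotone function of $|w_e|$, the map $\rho$ preserves both the sign of each weight and the magnitude comparison of any two incident edges at every player, so by the observation above the orderings of the four-tuples at every player are preserved. Consequently, the set of improving swaps in the rescaled instance coincides with that in the original.

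Finally I would apply Theorem~\ref{thm:sym-potential} to the rescaled instance, obtaining the potential $\tilde{\phi}(\pi)=\Sigma_i(v_i+\tilde{u}_i(\pi))$. Since $|\tilde{w}_e|\leq|\tau|\leq 2n$, we have $|\tilde{u}_i|\leq 4n$, so $\tilde{\phi}$ is integer-valued in an interval of length $O(n^2)$; and the symmetric-case accounting in Theorem~\ref{thm:sym-potential} yields $\Delta\tilde{\phi}=2(\Delta\tilde{u}_i+\Delta\tilde{u}_j)\geq 4$ on every improving swap. Hence the local search terminates in $O(n^2)$ steps, each implementable in polynomial time, and the allocation it returns is 2-stable in the rescaled, and therefore also in the original, instance. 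The hard part is the case analysis in the second paragraph, which establishes the sufficiency of signs and magnitude comparisons; once this is in hand, the rescaling and the polynomial bounds follow routinely.
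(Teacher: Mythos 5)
Your proof is correct, but it takes a genuinely different route from the paper's. The paper builds a bespoke integer potential directly on the original instance: it collapses each player's four possible utility values to a level $f_i(\cdot)\in\{0,1,2,3\}$, orders players by their largest incident weight magnitude via a ranking $g$, and proves by case analysis that $P(\pi)=\sum_{i}(2n-g(i))f_i(u_i(\pi))$ strictly increases at every blocking-pair resolution; the priority coefficients $(2n-g(i))$ are there precisely to absorb the troublesome case where a swap that improves $i$ and $j$ drops a neighbour's level by $2$. You instead transform the instance itself, replacing each weight by its signed global rank, argue that this preserves the entire improving-swap structure (the crux being that the order type of the four subset sums $\{0,\,w_1,\,w_2,\,w_1+w_2\}$ — including ties, which matter since blocking requires strict improvement — is determined by the two signs and the single comparison of $|w_1|$ with $|w_2|$, all of which a sign-preserving monotone rescaling respects), and then invoke Theorem~\ref{thm:sym-potential} verbatim on the rescaled instance, whose potential is now integer-valued, $O(n^2)$-bounded, and increases by at least $4$ per step via the identity $\Delta\phi=2(\Delta u_i+\Delta u_j)$. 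Your route is more modular: it reuses the already-proven monotonicity of $\phi$ rather than re-deriving monotonicity for a nonlinear level function, it handles negative weights uniformly (the paper's write-up defers that case), and it makes transparent exactly where degree two is used — for degree three one would need to preserve comparisons such as $|w_1|$ against $|w_2|+|w_3|$, which signed ranks do not respect. The case analysis of the six pairwise comparisons, which you correctly identify as the hard part, checks out, so the argument is sound.
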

\begin{proof}[Proof sketch]
Let $c_{i,1} = \max_{j \in i_{\tau}} |w_{j, i}|$ with  
$c_{i,2} = \min_{j \in i_{\tau}} |w_{j,i}|$ if $|i_{\tau}| = 2$ and $c_{i,2} = 0$ otherwise ($\forall i \in N$).
Let $g: N \rightarrow N$ denote an ordering of the players (where
$g(i)$ denotes the rank of $i \in N$ in this ordering) which satisfies
the following condition: if $g(i) < g(j)$ then $c_{i,1} \geq
c_{j,1}$.  For $i \in N$, let $u_i^{\min} = \min_{\pi \in
	\Pi}u_i(\pi)$ and
$Y_i=\{u_i^{\min},u_i^{\min}+c_{i,2},u_i^{\min}+c_{i,1},u_i^{\min}+c_{i,1}+c_{i,2}\}$.
Note that, for all $\pi \in \Pi$, $u_i(\pi) \in Y_i$. For $x \in Y_i$, let

$f_i(x) = \begin{cases}
0 & \quad \text{if } x = u_i^{\min} \\
1 & \quad \text{if } x = u_i^{\min}+c_{i,2} (c_{i,2} \leq c_{i,1}) \\
2 & \quad \text{if } x = u_i^{\min}+c_{i,1} \text{ and } c_{i,1} > c_{i,2} \\
3 & \quad \text{if } x = u_i^{\min}+c_{i,1}+c_{i,2} \text{ and } c_{i,2} > 0
\end{cases}
$ \\
It can be shown that $P(\pi) = \sum_{i \in N}
(2n - g(i)) f_i(u_i(\pi))$ is a potential function with an upper bound of $6n^2$ and a step size of at least 1 ensuring computation in polynomial time. 
\end{proof}

The above result implies efficient computation for various classes of
graphs which are often studied, for instance, simple cycles. One
natural question is whether the result can be extended to structures
with small (logarithmic) neighbourhood. Such a restriction on the
neighbourhood graph has interesting consequences in various classes of
strategic form games where equilibrium computation is generally known
to be hard \cite{gottlob_pure_2005}. We now show that for bounded
degree graphs the problem remains PLS-complete.

\begin{restatable}{theorem}{thmSymConstHard}
\label{thm:symConstHard} 
Finding a 2-stable allocation in a symmetric \name{} with uniform valuation in
which the underlying player graph has vertices with degree at most six
is PLS-complete.
\end{restatable}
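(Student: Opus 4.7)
Membership in PLS follows exactly as in Theorem~\ref{thm:pls-complete}: an allocation is a solution, the social welfare serves as the objective, and the neighbourhood of swaps is polynomially checkable. So the task is to establish PLS-hardness while keeping every vertex of the player graph of degree at most six.

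The plan is to re-run the reduction used in Theorem~\ref{thm:pls-complete}, but to start from a bounded-degree variant of max-cut. Specifically, we invoke the known tight PLS-reduction that shows that \textsc{Max-Cut} under the FLIP neighbourhood remains PLS-complete even when the input graph has maximum degree five (Elsässer and Tscheuschner's strengthening of Schäffer-Yannakakis). Given such an instance $Q=(V,E,\{z_e\}_{e\in E})$ of bounded-degree max-cut, we apply verbatim the construction of Theorem~\ref{thm:pls-complete}: create $v_{\txtblue}, v_{\txtred}$ for each $v \in V$, add the edges $(u_{\txtblue},v_{\txtblue})$ and $(u_{\txtred},v_{\txtred})$ with weight $z_{(u,v)}$ for each $(u,v)\in E$, and add the ``binder'' edge $(v_{\txtblue},v_{\txtred})$ with weight $w_{\mathit{max}}$ for each $v$. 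The item graph is again the complete bipartite graph on $2|V|$ items.

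The degree bookkeeping is the heart of the argument. In the player graph $\pgraph$, each node $v_{\txtblue}$ is incident to one blue edge for every neighbour of $v$ in $Q$ (at most five) together with the single binder edge to $v_{\txtred}$; the same holds for $v_{\txtred}$. Hence every vertex of $\pgraph$ has degree at most $5+1=6$. The choice of $w_{\mathit{max}}$ ensures, as in the proof of Theorem~\ref{thm:pls-complete}, that in every 2-stable allocation the two items assigned to $v_{\txtblue}$ and $v_{\txtred}$ lie on opposite sides of the bipartition of the item graph, so allocations correspond to cuts of $V$ and profitable swaps correspond precisely to profitable vertex flips in $Q$. Since the original reduction was tight and we are only restricting the input family, tightness carries over, yielding PLS-completeness with maximum degree six.

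The main obstacle is locating the right bounded-degree variant of max-cut and checking that the transformation does not create any additional edges beyond the one binder edge per original vertex; any auxiliary gadgetry introduced to obtain a tighter degree bound on max-cut must not inflate the degree in $\pgraph$ beyond $d+1$. Everything else, including social-welfare preservation and the correspondence between swaps and flips, is inherited from Theorem~\ref{thm:pls-complete}.
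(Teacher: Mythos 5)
Your argument is correct and matches the paper's proof exactly: both invoke the Els\"asser--Tscheuschner result that local max-cut under FLIP is PLS-complete for graphs of maximum degree five, and both observe that the reduction of Theorem~\ref{thm:pls-complete} adds only the single binder edge $(v_{\txtblue},v_{\txtred})$ per original vertex, so the player graph has degree at most six. Your degree bookkeeping and the remark that tightness is inherited are exactly the (implicit) content of the paper's two-line proof sketch.
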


\begin{proof}[Proof sketch]
\cite{ETmaxCut11} showed that finding a local max-cut with a FLIP 
neighbourhood is PLS-complete even for graphs with degree at most
five. Since our reduction in Theorem~\ref{thm:pls-complete} increases
the degree of each vertex by at most one, the result follows.
\end{proof}

While in the analysis above, we consider restrictions on the player
graph, it is also natural to study restrictions on the item graph. A
starting point would be to consider the case when the item graph is a
complete graph. It can be verified that in this situation, the
externalities do not play a crucial role and a core stable allocation
can be computed using the TTC algorithm. A similar observation holds
when the edge set in the item graph is empty ($\lambda = \emptyset$). Another candidate restriction
would be the bipartite graph. However, our
PLS-hardness reduction in Theorem \ref{thm:pls-complete}
constructs a complete bipartite item graph. 
A careful analysis of the potential function in the context of complete
bipartite item graphs provides an upper bound in terms of the size of the partition.


\begin{restatable}{theorem}{thmSymBip}
For the symmetric \name{}, if the underlying item graph is a complete
bipartite graph $H = (U, V, U \times V)$ with $U$ and $V$ being the 2
partitions, a 2-stable allocation can
be computed in $O(n^{\text{min}(|U|,|V|)+4})$.
\end{restatable}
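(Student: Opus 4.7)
The plan is to compute a global maximiser of the potential function $\phi(\pi)=\sum_{i\in N}(v_i(\pi)+u_i(\pi))$ from the proof of Theorem~\ref{thm:sym-potential}. Any such $\pi^*$ is automatically 2-stable: were $(i,j)$ a blocking pair at $\pi^*$, the resolution $\pi'$ would satisfy $\phi(\pi')>\phi(\pi^*)$, contradicting maximality. So it suffices to compute $\arg\max_\pi \phi(\pi)$, which I will reduce to polynomially many bipartite assignment problems by exploiting the complete-bipartite structure of $H$.

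Decompose $\phi(\pi)=2T(\pi)+R(\pi)$, where $T(\pi)=\sum_{i\in N} v_i(\pi(i))$ and $R(\pi)=\sum_{i\in N} r_i(\pi)$. Since $H=(U,V,U\times V)$ is complete bipartite, $a_\lambda=V$ for every $a\in U$ and $a_\lambda=U$ for every $a\in V$; hence $N(i,\pi)$ depends on $\pi$ only through the coarse type map $\sigma_\pi\colon N\to\{U,V\}$ that records which side of the bipartition each player's item lies on, and so $R(\pi)=R(\sigma_\pi)$ is a function of $\sigma_\pi$ alone. Moreover, once $\sigma$ is fixed with $S=\sigma^{-1}(U)$, maximising $T$ over allocations consistent with $\sigma$ splits into two independent maximum-weight bipartite assignment problems with optimum value $M_U(S)+M_V(N\setminus S)$, where $M_U(S)$ is the maximum-weight assignment of items in $U$ to players in $S$ under weights $v_i(a)$ and $M_V$ is defined analogously.

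The algorithm then enumerates every $S\subseteq N$ with $|S|=|U|$; assuming without loss of generality that $|U|\le|V|$, there are at most $\binom{n}{|U|}=O(n^{\min(|U|,|V|)})$ such subsets. For each $S$, compute $R(\sigma_S)$ in $O(n^2)$ time by summing $w_{i,j}$ over edges of the player graph whose endpoints lie on opposite sides of $\sigma_S$, and compute $M_U(S)$ and $M_V(N\setminus S)$ via the Hungarian algorithm; then output an allocation attaining $\max_S\bigl(R(\sigma_S)+2(M_U(S)+M_V(N\setminus S))\bigr)$, which by the first paragraph is 2-stable. Multiplying the enumeration cost by the per-subset cost yields a total running time of $O(n^{\min(|U|,|V|)+4})$. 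The only non-routine step is the claim that $R(\pi)$ depends only on $\sigma_\pi$; this falls out immediately once one unwinds the definitions of $d_i(\pi)$ and $(\pi(i))_\lambda$ under the complete-bipartite item graph, and the remainder is a standard combination of subset enumeration with maximum-weight bipartite matching, with the two matching subproblems being independent precisely because every item in $U$ is compatible with every item in $V$.
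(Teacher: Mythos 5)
Your proposal is correct and follows essentially the same route as the paper: maximise the potential $\phi$ by observing that, for a complete bipartite item graph, the externality term depends only on which side of the bipartition each player's item lies, then enumerate the smaller side and finish with maximum-weight bipartite matching. The only (harmless) difference is that you enumerate unordered subsets $S$ and solve two matching subproblems, whereas the paper enumerates ordered assignments of the $|U|$ items and solves a single matching for the $V$-side; both give the claimed $O(n^{\min(|U|,|V|)+4})$ bound.
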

\begin{proof}[Proof sketch]
Let $N_U = \{i \mid \pi(i) \in U\}$ and $N_V = \{i \mid \pi(i) \in V\}$. Then
the potential function reduces to $\phi(\pi)=\sum_{i \in N_U}
2u_i(\pi) + \sum_{i \in N_V} 2v_i(\pi)$. Assume $|U| \leq |V|$. For
each assignment $\pi$, we can find the maximum value of $\sum_{i \in
N_V} 2v_i(\pi)$ by finding the maximum weight matching of the
bipartite graph $Q = (V, N_V, V \times N_V)$ with the weight of an
edge $(i, a)$ being $v_i(a)$ (in $O(n^4)$). Since there are $\frac{n!}{(n - |U|)!}$ ways 
of assigning players to $U$, an allocation with the optimal potential value can be computed in
$O(n^{\text{min}(|U|,|V|)+4})$.
\end{proof}

\begin{corollary}
For the symmetric \name{}, if the underlying item graph is a complete
bipartite graph with a constant number of vertices in one of the
partitions, a 2-stable allocation can be computed in polynomial time.
\end{corollary}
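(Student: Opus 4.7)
The plan is to derive this corollary as an immediate consequence of the preceding theorem, which asserts that a 2-stable allocation in a symmetric graphical matching problem with a complete bipartite item graph $H = (U, V, U \times V)$ can be computed in time $O(n^{\min(|U|,|V|)+4})$.

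First I would observe that under the hypothesis of the corollary, one of the two partitions, say $U$, has size bounded by a constant $k$, so $\min(|U|,|V|) \leq k$. Substituting this into the running time bound from the previous theorem yields $O(n^{k+4})$, which is polynomial in $n$ since $k$ is a constant independent of the instance size.

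The only subtlety worth verifying is that the input does indeed satisfy the preconditions of the previous theorem (symmetric player graph, complete bipartite item graph), which are explicitly assumed in the corollary. Beyond that, there is no technical obstacle: the argument is a direct specialization of the earlier complexity bound to the case where $\min(|U|,|V|)$ is treated as a constant rather than as part of the input size. Hence no new construction or algorithm is needed beyond the one underlying the previous theorem.
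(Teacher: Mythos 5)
Your proposal is correct and matches the paper's (implicit) reasoning exactly: the corollary is stated without proof as an immediate consequence of the preceding theorem, and specializing the $O(n^{\min(|U|,|V|)+4})$ bound to the case $\min(|U|,|V|) \leq k$ for a constant $k$ is precisely the intended argument.
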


There are various interesting graph structures which satisfy the above
restriction, the simplest being the star graph, which is independently important in modelling workplaces with flat hierarchies, such as research labs. 

\section{Stability in Asymmetric Neighbourhood}
\label{sec:asymmetric}
When the underlying player graph is not symmetric,
Example~\ref{ex:no-2stable} shows that 2-stable outcomes need not
always exists. The negative result already holds for the allocation
problem with uniform valuation where the underlying player graph is a
\cycle{} and the item graphs consists of two disconnected
cliques. This raises the question: What is the complexity of deciding
if an instance of the \name{} has a 2-stable outcome? We show that in
general, this problem is NP-complete. We then identify restrictions
where stable outcomes always exist and can be computed efficiently.

\begin{figure}
\centering
\tikzstyle{agent}=[circle,draw=black!80,thick, minimum size=2em,scale=0.8]
\begin{tikzpicture}[auto,>=latex',shorten >=1pt,on grid]
\newdimen\R
\R=1.3cm
\newcommand{\llab}[1]{{\small $\{#1\}$}}

\node[agent] (1)                    {\footnotesize $C_{1,i}$};
\node[agent] (2) [right= 3 of 1]       {\footnotesize $C_{2,i}$};
\node[agent] (3) [below= 2.5 of 1]       {\footnotesize $C_{3,i}$};
\node[agent] (4) [right= 3 of 3]       {\footnotesize $C_{4,i}$};


\draw[->] (1) to node [above] {\footnotesize $d$} (2);
\draw[->] (3) to node [below] {\footnotesize $d$} (4);


\draw[->, bend right=50] (2) to  node [above] {\footnotesize $d-e$} (1);
\draw[->, bend left = 50] (4) to  node [below] {\footnotesize $d-e$} (3);

\draw[<->, bend left=65] (3) to  node [left] {\footnotesize $-2d$} (1);
\draw[<->, bend right=65] (4) to  node [right] {\footnotesize $-2d$} (2);

\draw[->, bend left=20] (1) to  node [pos=.8,right] {\footnotesize $d-e$} (4);
\draw[->, bend left=20] (2) to  node [pos=.1,right] {\footnotesize $d$} (3);

\draw[->, bend left=20] (3) to  node [pos=.2,left] {\footnotesize $d-e$} (2);
\draw[->, bend left=20] (4) to  node [pos=.9,left] {\footnotesize $d$} (1);

\end{tikzpicture}
\caption{A gadget
\label{fig:clause}
}
\end{figure}

\begin{restatable}{theorem}{thmCheckTwoStableNP}
The problem of deciding if an instance of the \name{} has a 2-stable
allocation is NP-complete.  
\end{restatable}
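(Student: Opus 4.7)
The plan is to establish both directions separately: NP membership via a standard guess-and-check, and NP-hardness via a reduction from 3-SAT built around the gadget in Figure~\ref{fig:clause}.

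For NP membership, given a candidate allocation $\pi$, one iterates over all $O(n^2)$ pairs of players $(i,j)$, computes $u_i(\pi), u_j(\pi), u_i(\pi')$, and $u_j(\pi')$ where $\pi'$ resolves the proposed swap, and certifies that no such swap is strictly improving for both. Every utility evaluation is polynomial in the encoding (it is a sum over at most $|N(i,\pi)|$ edge weights), so verification lies in P and the decision problem is in NP.

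For NP-hardness I would reduce from 3-SAT. Given a CNF formula $\varphi$ on variables $x_1,\dots,x_n$ with clauses $C_1,\dots,C_m$, I would introduce a small variable gadget for each $x_k$ whose only 2-stable internal allocations encode the two truth values of $x_k$, and a clause gadget of four players $C_{1,i},C_{2,i},C_{3,i},C_{4,i}$ for each $C_i$, wired exactly as in Figure~\ref{fig:clause}. The key design criterion is that in isolation each clause gadget has no 2-stable allocation, mirroring the cyclic improvement structure of Example~\ref{ex:no-2stable}: with $d \gg e > 0$, any proposed internal allocation induces a strictly improving swap between two of its players, and iterating this produces a cycle of length four through the gadget's configurations. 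External edges from the variable gadgets corresponding to the literals of $C_i$ would be attached so that if at least one literal is satisfied, the incoming externality breaks the cycle at one configuration and makes it 2-stable, whereas if all three literals are falsified the cycle persists. Intrinsic valuations $v_i$ would be chosen large on gadget-specific items and zero elsewhere, so that cross-gadget exchanges are strictly disincentivised and the only swap dynamics of interest are intra-gadget ones.

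Correctness then splits into two directions: if $\varphi$ is satisfiable, fixing each variable gadget in the corresponding configuration and each clause gadget in its uniquely stabilised internal configuration yields a global 2-stable allocation, with no blocking pair crossing gadget boundaries thanks to the intrinsic-valuation penalty. Conversely, if $\varphi$ is unsatisfiable, then for any global allocation some clause gadget $C_i$ receives no satisfying external support, and the internal cycle forces an intra-gadget blocking pair. The main obstacle is the calibration of the clause gadget: one must check (i) that the four possible internal allocations cyclically $2$-block one another in isolation, (ii) that a single satisfying literal suffices to stabilise exactly one of these allocations, and (iii) that no spurious blocking pair arises between the gadget players and either the variable gadgets or other clause gadgets. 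Carrying out this finite but delicate case analysis on the weights $d$, $d-e$, $-2d$, and verifying that the parameter $e$ can be chosen so that every improving swap is strict, is the technically most involved part of the proof.
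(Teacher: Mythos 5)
Your NP-membership argument is fine and matches the paper's. The hardness direction, however, is a plan rather than a proof, and the part you defer is precisely where the content lies. Most importantly, you never specify the item graph, yet in this model the item graph is what makes externalities conditional on the allocation: a player-graph edge $(j,i)$ contributes $w_{j,i}$ to $u_i(\pi)$ only when $\pi(j)$ is adjacent to $\pi(i)$ in $H$. As written, your reduction has no mechanism by which a variable gadget's truth value is communicated to a clause gadget. The paper's construction does all of this explicitly: the items are partitioned into six cliques $A_{S_1},A_{S_2},A_{C_1},\ldots,A_{C_4}$, with $A_{S_2}$ joined completely to $A_{C_1}$ (and $A_{C_1}$ to $A_{C_2}$, $A_{C_3}$ to $A_{C_4}$), so that a literal player influences the clause player $c_{i,1}$ exactly when it holds an item in $A_{S_2}$, i.e.\ exactly when that literal is set to true. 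Non-uniform valuations (an entire table of values $b$, $b+4e-f$, $b-6d$, \dots) together with five confinement lemmas are then needed to show that every 2-stable allocation assigns each player type an item from its own clique; your ``large on gadget-specific items, zero elsewhere'' remark is in this spirit, but none of the required case analysis is carried out, and it is not routine.

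Second, your account of the clause gadget's behaviour does not match what the weights in Figure~\ref{fig:clause} actually deliver. In the paper, the canonical allocation of a clause gadget is \emph{not} an Example~\ref{ex:no-2stable}-style cycle through four configurations: it admits exactly one candidate blocking pair, $(c_{i,1},c_{i,3})$, and the valuations are tuned (the $+4e$ bonus on $A_{C_1}$ items for $C_1$ players played off against the $\pm e$ literal edges, with $f\ll e\ll d$) so that this single swap is improving for both players precisely when all three literals of $c_i$ are falsified. The $-2d$ edges and the four-player structure serve a different purpose: they manufacture blocking pairs whenever clause players stray outside their designated item cliques, which is what the confinement lemmas exploit. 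So the calibration you flag as ``the technically most involved part'' is not a finite verification of a design you have already pinned down; it is the design itself, and the gadget as you describe it (no item graph, a clause gadget with no stable allocation in isolation) would need to be reworked along the paper's lines before the two correctness directions could be argued.
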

\begin{proof}[Proof sketch]
Given an allocation $\pi$, deciding whether $\pi$ is a 2-stable
allocation can be done in polynomial time. It suffices to check if
there is a blocking pair in $\pi$. Thus the above problem is in NP. To
show hardness, we give a reduction from 3-SAT.
Let the 3-SAT instance have $q$ variables ($\{a_1,a_2,\ldots,a_q\}$) and $m$ clauses ($\{c_1,c_2,\ldots, c_m\}$). We 
create an instance of the \name{} with $|N| = 4m+2q$, with the players 
being differentiated into 6 types with
$N = C_1 \cup C_2 \cup C_3 \cup C_4 \cup S_1 \cup S_2$, where $\forall i \in \{1,2,3,4\}$ $|C_i| = m$
and $\forall j \in \{1,2\}$ $|S_j| = q$.
The key idea is to set up item valuations and the player neighbourhood 
structure such that a stable allocation can only exist if each player is 
assigned an item from the item set corresponding to its type(s) \textit{and} 
the 3-SAT instance is satisfiable. We set up the following constants to aid
our explanation: $b = (2q + 3m + 20)d = (2q + 3m + 20)^{2}e = (2q + 3m + 20)^{3}f$ with $f = 1$.
There are 4 players corresponding to 
each clause and 2 players corresponding to each variable. For each clause $c_i$, 
the players corresponding to it ($c_{i,1}$,$c_{i,2}$, $c_{i,3}$ and $c_{i,4}$)
are connected as shown in 
Figure \ref{fig:clause}, with $w_{c_{i,2}, c_{i,1}} = w_{c_{i,4}, c_{i,3}} = d - e$,
$w_{c_{i,1}, c_{i,2}} = w_{c_{i,3}, c_{i,4}} = d$, 
$w_{c_{i,4}, c_{i,1}} = w_{c_{i,2}, c_{i,3}} = d$, 
$w_{c_{i,1}, c_{i,4}} = w_{c_{i,3}, c_{i,2}} = d - e$ and 
$w_{c_{i,3}, c_{i,1}} = w_{c_{i,1}, c_{i,3}} = w_{c_{i,2}, c_{i,4}} = w_{c_{i,4}, c_{i,2}} = -2d$. 
For each variable $a_j$, there are two
players $s_{j,1}$ and $s_{j,2}$ corresponding to the positive literal $a_j$
and the negative literal $\neg a_j$ respectively. 
The edge connecting these two players has a large negative weight
($w_{s_{j,1},s_{j,2}} = w_{s_{j,2},s_{j,1}} = -d$). 
For each clause $c_i$ where a positive literal $a_j$ appears $x$ times
and the corresponding negative literal $\neg a_j$ appears $y$ times,
$w_{s_{j,1}, c_{i,1}} = (x - y)e$ and $w_{s_{j,2}, c_{i,1}} = (y - x)e$.
For example, suppose $c_i = a_{t} \vee \neg a_{u} \vee a_{v}$. ($t \neq u \neq v$) Then, 
$w_{s_{t,1}, c_{i,1}} = w_{s_{u,2}, c_{i,1}} = w_{s_{v,1}, c_{i,1}} = e$ and 
$w_{s_{t,2}, c_{i,1}} = w_{s_{u,1}, c_{i,1}} = w_{s_{v,2}, c_{i,1}} = -e$.

	\begin{table}
        \centering
		\resizebox{0.45\textwidth}{!}{%
			\begin{tabular}{|l|l|l|l|l|l|}
				\hline
				& $p \in S_1 \cup S_2$&$p \in C_1$&$p \in C_2$&$p \in C_3$&$p \in C_4$ \\
				\hline
				$t \in A_{S_1}$ & $b$ & 0 & 0 & 0 & 0 \\
				\hline
				$t \in A_{S_2}$ & $b$ & 0 & 0 & 0 & 0 \\
				\hline
				$t \in A_{C_1}$ & 0 & $b+4e-f$ & $b-6d$ & $b$ & $b-6d$ \\
				\hline
				$t \in A_{C_2}$ & 0 & $b-6d$ & $b$ & $b-6d$ & $b - f$ \\
				\hline
				$t \in A_{C_3}$ & 0 & $b-f$ & $b-6d$ & $b$ & $b-6d$ \\
				\hline
				$t \in A_{C_4}$ & 0 & $b-6d$ & $b-f$ & $b-6d$ & $b$ \\
				\hline
				
			\end{tabular} } \end{table}

The item graph is $A = A_{C_1} \cup A_{C_2} \cup A_{C_3} \cup
A_{C_4} \cup A_{S_1} \cup A_{S_2}$, with $|A_{C_i}| = m$ and
$|A_{S_j}| = q$, with each of these six subsets being cliques.
Additionally, there is complete connection between $A_{S_2}$ and
$A_{C_1}$, $A_{C_1}$ and $A_{C_2}$ and $A_{C_3}$ and $A_{C_4}$.  The
table above contains $v_p(t)$, for each player $p$ and item $t$.
\end{proof}

Thus, deciding the existence of a 2-stable allocation in an
instance of the \name{} is NP-complete. It is natural to try to identify
restricted classes in which stable allocations are guaranteed to
exist and classes where such allocations can be computed
efficiently. A natural restriction to consider is a hierarchical
influence structure, which is present in several organizations. In our
framework, such a structure can be modelled by restricting the player
graph to a directed acyclic graph (DAG). We now show that when the
player graph is a DAG, a core stable allocation is guaranteed to exist
and such an allocation can be computed in polynomial time.

\begin{restatable}{theorem}{thmTwoStableDAG}
Consider an instance $\alloc$ of the \name{} where the underlying player graph
$\pgraph$ is a DAG. A core stable allocation is guaranteed to exist in
$\alloc$ and it can be computed in polynomial time.
\end{restatable}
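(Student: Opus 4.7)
The plan is to build a core stable allocation by a serial dictatorship procedure that follows a topological ordering of the DAG. First, compute a topological order $v_1, v_2, \ldots, v_n$ of the player graph $\pgraph$ so that every edge $(v_i, v_j) \in \tau$ satisfies $i < j$; equivalently, for every $k$, the in-neighbourhood $v_{k,\tau}$ is contained in $\{v_1, \ldots, v_{k-1}\}$. Then process players in this order: at step $k$, let $A_k$ be the set of items not yet allocated, and assign to $v_k$ the item
\[
a^* \in \arg\max_{a \in A_k}\Big( v_{v_k}(a) + \sum_{j \in v_{k,\tau} : \pi(j) \in a_\lambda} w_{j, v_k}\Big),
\]
where $\pi(j)$ for $j$ earlier than $v_k$ is the already fixed allocation. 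Observe that the value of this maximum is exactly the eventual utility $u_{v_k}(\pi)$, because $v_k$'s externality term depends only on the items held by players in $v_{k,\tau}$, all of which are frozen by the time step $k$ is reached.

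To establish core stability, suppose for contradiction that $X \subseteq N$ is a blocking coalition realised by an allocation $\pi'$ and bijection $\mu : X \to X$ with $u_i(\pi') > u_i(\pi)$ and $\pi'(i) \neq \pi(i)$ for every $i \in X$. Let $v_k$ be the member of $X$ of smallest topological index. Since $\pi'(v_k) \neq \pi(v_k)$, we must have $\mu(v_k) = v_{k'}$ for some $k' > k$, so $\pi'(v_k) = \pi(v_{k'})$. By minimality of $v_k$ in $X$, the entire in-neighbourhood $v_{k,\tau} \subseteq \{v_1, \ldots, v_{k-1}\}$ is disjoint from $X$, hence $\pi'|_{v_{k,\tau}} = \pi|_{v_{k,\tau}}$. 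Consequently $u_{v_k}(\pi')$ equals the value of the step $k$ objective evaluated at the item $\pi(v_{k'})$. Because $\pi(v_{k'}) \in A_k$ was among the options considered at step $k$ and $\pi(v_k)$ was chosen as a maximiser, we obtain $u_{v_k}(\pi) \geq u_{v_k}(\pi')$, contradicting strict improvement for $v_k$.

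Polynomial running time is immediate: topological sorting takes $O(n + |\tau|)$, and each of the $n$ assignment steps evaluates a sum of at most $|v_{k,\tau}| \leq n$ terms for each of at most $n$ candidate items, yielding an overall $O(n^3)$ bound. The main conceptual obstacle is the core stability argument, which hinges on the observation that in a DAG the smallest-index member of any candidate blocking coalition has its utility completely determined by allocations outside the coalition; this is precisely what lets us compare $u_{v_k}(\pi')$ against the greedy choice made at step $k$ and derive the contradiction.
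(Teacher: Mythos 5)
Your proposal is correct and follows essentially the same route as the paper: a serial dictatorship along a topological order of the DAG, with the core-stability argument hinging on the minimal-index member of any blocking coalition having its externality term fully determined by players outside the coalition, so the greedy choice already dominates. Your version is in fact slightly more careful than the paper's, since you explicitly note both that the in-neighbourhood of that member is disjoint from the coalition (so $\pi'$ and $\pi$ agree there) and that the coveted item was still available at that step.
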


\begin{restatable}{theorem}{thmCycleOneConnectedNode}
\label{thm:CycleOneConnectedNode}
Consider an instance $\alloc = (\pgraph,\igraph,(v_i)_{i \in N})$ of
the \name{} with uniform valuation. If $\alloc$ satisfies the following
conditions, then a 2-stable allocation always exists and it can be
computed in polynomial time.
\begin{itemize}
\item $G$ is a \cycle{}  and $\exists i,j \in N$ such that $(j,i)
  \in \tau$ and $w_{j,i} >0$.
\item there exists $a \in A$ such that for all $a' \in A - \{a\}$, $(a,a') \in
  \lambda$. 
\end{itemize}
\end{restatable}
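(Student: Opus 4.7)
The plan is to reduce to the DAG case handled by Theorem~\ref{thmTwoStableDAG}. Let $p(k)$ and $s(k)$ denote the unique cycle-predecessor and cycle-successor of $k$ in $G$, so that the hypothesis picks out a player $i$ with $w_{p(i),i}>0$. I would set $\pi(i)=a$. Because $a$ is connected in $\lambda$ to every other item, the incoming edge at $i$ is satisfied no matter what item $p(i)$ holds, so $r_i(\pi)=w_{p(i),i}$ already attains its maximum possible value; symmetrically, the outgoing edge from $i$ to $s(i)$ is satisfied for every item $s(i)$ may hold, so $u_{s(i)}=c+w_{i,s(i)}$ is a constant independent of the remaining allocation. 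A short case check then shows that no blocking pair can involve $i$ or $s(i)$: any swap touching $i$ either keeps $r_i$ at its maximum or strictly decreases it, and any swap touching $s(i)$ but not $i$ leaves $u_{s(i)}$ unchanged (and the swap of $i$ with $s(i)$ is already covered by the first case).

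To fix the remaining $n-1$ items I would pass to the sub-instance $\alloc'$ on $N'=N\setminus\{i\}$ with items $A'=A\setminus\{a\}$, induced player graph $G'=G[N']$, and the inherited uniform valuations. Removing $i$ from the cycle breaks it into the directed path $s(i)\to s(i)+1\to\cdots\to p(i)$, which is a DAG. Theorem~\ref{thmTwoStableDAG} then yields in polynomial time a core-stable (hence 2-stable) allocation $\pi'$ of $\alloc'$; extend it to $\pi$ by declaring $\pi(i)=a$.

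It remains to rule out a blocking pair $(k,\ell)$ with $k,\ell\in N\setminus\{i,s(i)\}$. For any such pair, both $p(k)$ and $p(\ell)$ lie in $N'$ and their items under $\pi$ are non-$a$, so the expressions for $r_k(\pi)$ and $r_\ell(\pi)$, and the differences produced by the swap, are exactly the quantities computed in $\alloc'$; 2-stability of $\pi'$ in $\alloc'$ thus forbids such a pair. The main subtlety will be the careful case analysis around $s(i)$, verifying that under every type of swap its utility really is invariant, together with the adjacent case $\ell=s(k)$, where symmetry of $\lambda$ keeps $r_\ell$ unchanged after the swap so the pair is automatically non-blocking. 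Polynomial-time computability is then inherited directly from Theorem~\ref{thmTwoStableDAG}.
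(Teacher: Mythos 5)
Your proof is correct and follows essentially the same route as the paper's: assign the universally connected item $a$ to the player $i$ with positive incoming weight, then greedily allocate the remaining items along the cycle order starting at $i$'s successor---which is exactly what invoking the theorem for DAG player graphs on the residual path does. The paper simply argues the greedy step inline (player $1$ already has maximum attainable utility; any later player chose greedily at a point when its utility was fully determined and the partner's item was still available) rather than packaging it as a reduction to the DAG case, but the construction and case analysis coincide.
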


\begin{restatable}{theorem}{thmCycleOneDegreeItemCoreStable}
\label{thmCycleOneDegreeItemCoreStable}
Consider an instance $\alloc = (\pgraph,\igraph,(v_i)_{i \in N})$ of the
\name{} with uniform valuation. If $\pgraph$ is a \cycle{} with 
positive connection weights $w$ and $\igraph$ is connected and has at least 
1 node with degree 1, then a core stable allocation always exists and it can be
computed in polynomial time.
\end{restatable}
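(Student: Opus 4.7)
The plan is to construct a core stable allocation by placing items in DFS preorder along a spanning tree of $H$ rooted at the leaf $a^*$. First, WLOG assume $v_i \equiv 0$, and observe that since $G$ is a directed cycle with positive weights, each player's utility is binary: either $0$ (the player is \emph{trapped}) or the maximum $w_{j,i}$, where $j$ is the unique in-neighbour of $i$ in $G$. Consequently, any blocking coalition $X$ can contain only trapped players, since at-max players cannot strictly improve.

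The construction I would use is as follows. Compute a spanning tree $T$ of $H$ rooted at $a^*$; let $v_1 = a^*, v_2 = b^*, v_3, \ldots, v_n$ be its DFS preorder; and assign $\pi(p_i) = v_i$, where $p_1, \ldots, p_n$ is the cycle order of players. Any consecutive pair $(v_{i-1},v_i)$ arising from a DFS-advance is a $T$-edge and hence an $H$-edge, placing the corresponding player at maximum utility. Because $a^*$ is a leaf, the initial DFS descent $v_1, v_2, \ldots, v_m$ (down to the first $T$-leaf encountered) consists entirely of advance steps, so $p_2, \ldots, p_m$ are all at-max and cannot belong to $X$.

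For core stability, I would suppose toward contradiction that $\mu : X \to X$ witnesses a blocking coalition and trace the forced chain of item reassignments. For each $p_i \in X$, the strict-improvement condition forces $\pi(\mu(p_i))$ to be $H$-adjacent to $\pi'(j)$ where $j$ is $p_i$'s in-neighbour. Following this chain along the DFS preorder pushes $\mu$ to eventually demand an item held by a player on the initial descent path from $a^*$; but those players are at-max and not in $X$, a contradiction. The polynomial-time bound follows from the standard spanning tree and DFS computations.

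The main obstacle I anticipate is the coupled case in which a trapped player's in-neighbour is itself trapped (and hence in $X$), because then $\pi'(j)$ depends on $\mu$ rather than being the fixed value $v_{i-1}$, so the chain argument cannot be read off position by position. I would handle this by an induction along DFS preorder positions that tracks, at each stage, the set of items $\mu$ can legally produce at $p_i$. The leaf status of $a^*$ is essential throughout: it forces $v_1 v_2 = a^* b^*$ to be a $T$-edge and ensures the initial descent is entirely at-max, providing the anchor against which any blocking chain of required swaps must collide.
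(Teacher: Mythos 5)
Your construction, as stated, does not work: taking the preorder of an \emph{arbitrary} spanning tree of $H$ can produce an allocation that is not even 2-stable, let alone core stable. Concretely, let $H$ have vertices $a^*,b,x,x',y,y',u$ and edges $a^*b$, $bx$, $by$, $bu$, $xx'$, $yy'$, $x'u$ (so $a^*$ has degree one and $H$ is connected), take the spanning tree that drops $x'u$, and use the preorder $a^*,b,x,x',y,y',u$ for players $1,\dots,7$ around the cycle. Players $5$ and $7$ (holding $y$ and $u$) are both trapped, yet they form a blocking pair: after swapping, player $5$ holds $u$, which is $H$-adjacent to the item $x'$ held by player $4$, and player $7$ holds $y$, which is adjacent to $y'$ held by player $6$; both jump from utility $0$ to a positive weight. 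The property your argument silently needs is that at every ``backtrack'' position $i$ the item $v_{i-1}$ has \emph{no} $H$-neighbours among $v_i,\dots,v_n$. That holds when $v_1,\dots,v_n$ is a genuine DFS discovery order of the graph $H$ itself (DFS retreats from a vertex only after all its neighbours are discovered), but it fails for the preorder of an arbitrary spanning tree, as the non-tree edge $x'u$ shows. Your sketch never isolates or invokes this property, and without it the construction is wrong.

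Even with the corrected construction, the stability argument is not finished. The ``chain'' you describe need not collide with the initial descent (in the example above the pair $\{5,7\}$ never involves players $2,3,4$), and you yourself flag the coupled case --- a trapped player whose predecessor is also in the coalition --- as an obstacle you have not resolved. The clean way to close the argument, and the route the paper takes, is to consider the \emph{first} member $y$ of the blocking coalition $X$ in the assignment order: its predecessor is then guaranteed to lie outside $X$ and therefore keeps its item, while the item $y$ would receive comes from a member of $X$ positioned strictly later, hence still unassigned when $y$ was placed. The paper's construction is in fact simpler than yours: assign items greedily along the cycle (player $1$ gets the degree-one item; each subsequent player takes an unassigned item maximizing its utility given its predecessor's fixed item). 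The greedy choice immediately rules out the first coalition member improving via a later item, and the case $y=1$ is handled by observing that $a^*$'s unique neighbour sits with the at-max player $2$, so only the at-max player $3$ could benefit from receiving $a^*$. If you prefer the DFS formulation, the same first-member argument goes through, but only if the order is a DFS order of $H$ rather than a preorder of an arbitrary spanning tree.
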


Note that it is not the case that every 2-stable allocation is also a
core stable allocation in this restricted setting (as specified in
Theorem \ref{thmCycleOneDegreeItemCoreStable}). Consider an instance of
the \name{} with uniform valuation where $N=\{1,\ldots,6\}$ forms a
\cycle{} and the item graph is as shown in
Figure~\ref{fig:2stableNotCore}. The numbers labelling the nodes of
the item graph denote the players to which the items are assigned.
Consider the allocation $\pi$ where
$\pi(i)=A_i$ for $i \in \{2,4,6\}$ and $\pi_i=B_i$ for $i \in
\{1,3,5\}$ (also labelled in Figure~\ref{fig:2stableNotCore}). It can
be verified that $\pi$ is 2-stable but not core stable due to the
existence of the blocking coalition $X=\{1,3,5\}$. In fact, at $\pi$
there is a unique blocking coalition given by $X$. The resolution of
the blocking coalition generates an allocation $\pi'$ where
$\pi'(1)=B_5, \pi'(3)=B_3$ and $\pi'(5)=B_5$. Now $X'=\{2,4,6\}$ forms
a blocking coalition in $\pi'$. It can also be verified that starting
at $\pi$ there is a unique sequence of improvement steps resolving
blocking coalitions which results in an infinite coalition improvement
path.

\begin{figure}
\centering
\tikzstyle{agent}=[circle,draw=black!80,thick, minimum size=1em,scale=0.8]
\begin{tikzpicture}[auto,>=latex',on grid]
\newdimen\R
\R=.7cm
\newcommand{\llab}[1]{{\footnotesize $#1$}}
\draw (90: \R) node[agent,label=right:{\llab{2}}] (2) {\footnotesize $a_2$};
\draw (90-120: \R) node[agent,label=above:{\llab{4}}] (4) {\footnotesize $a_4$};
\draw (90-240: \R) node[agent,label=above:{\llab{6}}] (6) {\footnotesize $a_6$};
\draw (90: 2.2*\R) node[agent,label=right:{\llab{1}}] (1) {\footnotesize $b_1$};
\draw (90-102: 2.4*\R) node[agent,label=right:{\llab{3}}] (3) {\footnotesize $b_3$};
\draw (90-257: 2.4*\R) node[agent,label=left:{\llab{5}}] (5) {\footnotesize $b_5$};
\foreach \x/\y in {2/4,4/6,6/2,1/2,3/4,5/6} {
    \draw[-] (\x) to (\y);    
}
\end{tikzpicture}
\caption{An item graph and an allocation
\label{fig:2stableNotCore}
}
\end{figure}

\section{Envy-Freeness}
\label{sec:envy}
Envy-freeness is a natural and well-studied notion of fairness in
resource allocation.  In the cake-cutting problem, the existence of a complete,
envy-free allocation is guaranteed for all agent
valuations \cite{alon1987ccef}. In the indivisible goods setting, we
have hardness results \cite{bouveret2008indiv} for computing the
existence of a complete, envy-free allocation (Note: An allocation is complete if all of the item(s) available has (have) been assigned to the players).  Recent works on resource
allocation with a graphical component (\cite{CEM17}, \cite{FD-graph17}, \cite{abebe2017fair} and \cite{beynier2018local}) have examined envy-freeness and its variants
in great detail.  
Typically, the notion of envy is defined with respect to a bundle of
items possessed by some other player. In our model, since a player's
utility is dependent on other players' allocations, we adopt a
slightly modified notion of envy-freeness, 
which is defined as follows.
An allocation $\pi$ is envy-free if there does not exist
$i,j$ such that $u_i(\pi) < u_i(\pi')$ where $\pi'(i) = \pi(j)$,
$\pi'(j) = \pi(i)$ and for all $k \in N -\{i,j\}$, $\pi(k) = \pi'(k)$.
The above notion is very similar to the notion of \textit{swap envy-freeness} defined
in \cite{BPZ13}. 
Note that, in a \name{}, every allocation is complete by definition. 
A natural question is to ask whether it is possible to decide the existence of an  envy-free
allocation given an instance of the \name{}.
We show that even for a \name{} with uniform valuation, 
where the underlying player graph is a cycle, this problem is NP-complete.
 
\begin{restatable}{theorem}{CycleDirectedEF}
For an instance of the \name{} with uniform valuation where the
underlying player graph is an unweighted \cycle{},
deciding if there exists an envy-free allocation 
is NP-complete. 
\end{restatable}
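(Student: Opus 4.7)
The plan is to show NP membership directly and to prove NP-hardness by reducing from the Hamiltonian cycle problem, restricted to graphs with no isolated vertices (which remains NP-complete, since any graph with an isolated vertex trivially has no Hamiltonian cycle). For membership, given a candidate allocation $\pi$, I would enumerate the $O(n^2)$ pairs $(i,j)$, compute the swap $\pi'$ and compare $u_i(\pi')$ with $u_i(\pi)$; each check is polynomial, so envy-freeness is verifiable in NP.

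For hardness, let $G=(V,E)$ be a graph with minimum degree at least one. I construct an instance of the \name{} with $|V|$ players arranged in a directed unweighted \cycle{} $1 \to 2 \to \cdots \to |V| \to 1$, item set $A=V$, item graph $H=G$, and uniform valuation $v_i(a)=0$. Each player $i$ then has exactly one in-neighbour, its predecessor $p$, so $r_i(\pi) \in \{0,1\}$ with $r_i(\pi)=1$ iff $\pi(p) \in (\pi(i))_\lambda$.

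The core of the correctness argument is a case analysis on $r_i(\pi)$. If $r_i(\pi)=1$, player $i$ cannot envy anyone, because $r_i(\pi') \leq 1 = r_i(\pi)$ after any swap. If $r_i(\pi)=0$, swapping $i$ with $p$ itself cannot raise $r_i$ (by the undirectedness of $\lambda$), so envy at $i$ arises iff some $j \neq p$ satisfies $\pi(j) \in (\pi(p))_\lambda$. Envy-freeness at such an $i$ therefore forces $\pi(p)$ to be non-adjacent in $H$ to every other allocated item; since $\pi$ is a bijection onto $A$, this forces $\pi(p)$ to be isolated in $H$. Because $G$ has no isolated vertices, envy-freeness forces $r_i(\pi)=1$ for every player $i$, which is exactly the condition that the cyclic sequence $\pi(1),\pi(2),\ldots,\pi(|V|)$ traces a Hamiltonian cycle in $G$. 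The converse is immediate: any Hamiltonian cycle of $G$, laid out along the player cycle, yields an allocation with $r_i(\pi)=1$ everywhere and is hence envy-free.

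The main obstacle is the case analysis at a player with $r_i(\pi)=0$: the envy condition is phrased via individual pairwise swaps, but the argument must promote it to a global structural constraint on $H$, namely the isolatedness of $\pi(p)$. Once that promotion is established, forbidding isolated vertices in the input $G$ removes all slack in the correspondence and equates envy-freeness precisely with Hamiltonicity of $G$.
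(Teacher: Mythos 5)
Your proposal is correct and uses essentially the same reduction as the paper: the item graph is the input graph, the players form a directed unweighted cycle, and envy-freeness is shown to force every player's item to be $H$-adjacent to its predecessor's item, which is exactly Hamiltonicity. The only (immaterial) difference is that you restrict the Hamiltonian-cycle instances to graphs with no isolated vertices while the paper assumes connectedness; both hypotheses play the same role of guaranteeing that a player with zero externality can find some item adjacent to its predecessor's item to envy.
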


It is not difficult to show that deciding the existence of a
complete, envy-free allocation in an instance of the symmetric \name{}
is NP-complete. Note that, deciding the existence of a complete, envy-free
assignment in an allocation problem is known to be
NP-complete \cite{bouveret2008indiv}.


\section{Conclusions}
\label{sec:conclusions}
In this paper we studied the problem of allocating indivisible items
to a set of players where agents have cardinal non-transferable
utilities associated with each allocation. We extended the one-sided
market model to the network setting where agents' utilities depend on
their neighbourhood externalities that are item specific and pairwise
separable.
We show that unlike in the case of one-sided markets, 2-stable
allocations may not always exist.
When the underlying neighbourhood
structure is symmetric, a 2-stable allocation is guaranteed to exist
although computing such an allocation is PLS-complete (already for
player graphs of degree 6). We provide a polynomial time procedure to
compute a 2-stable allocation when the degree of the player graph is
bounded by two. An interesting question is to see if this result can
be extended to player graphs of degree three using the technique
of \textit{local linear programs} as done for the local max-cut
problem \cite{localmaxcut-cubic}. Another natural question is the
existence of core stable outcomes.
While we believe that a core-stable allocation always exists in the
symmetric setting, so far, we have been unable to prove this
result. In case a core-stable outcome is not guaranteed to exist, it
would be useful to find the maximum value of c such that c-stable
outcomes exist.

There are several ways to extend the model. Allocations that allow
subsets of items to be assigned to players is an obvious choice. We
could also consider externalities which are not necessarily pairwise
separable. It would be interesting to see whether the existence
results continue to hold in these extended settings. It would also be
interesting to identify more general classes of neighbourhood
structures in which stable allocations are guaranteed to exist and
where such an allocation can be efficiently computed.

\section*{Acknowledgements}
We thank the reviewers for their
useful comments. Sunil Simon was partially supported by grant
MTR/2018/001244.

\bibliographystyle{named}
\bibliography{ref-allocation}  

\begin{thebibliography}{}

\bibitem[\protect\citeauthoryear{Abebe \bgroup \em et al.\egroup
  }{2017}]{abebe2017fair}
R.~Abebe, J.~Kleinberg, and D.C. Parkes.
\newblock Fair division via social comparison.
\newblock In {\em AAMAS'17}, pages 281--289, 2017.

\bibitem[\protect\citeauthoryear{Agarwal \bgroup \em et al.\egroup
  }{2019}]{AEG19}
A.~Agarwal, E.~Elkind, J.~Gan, and A.A. Voudouris.
\newblock Swap stability in schelling games on graphs.
\newblock {\em CoRR}, arXiv:1909.02421, 2019.

\bibitem[\protect\citeauthoryear{Alon}{1987}]{alon1987ccef}
N.~Alon.
\newblock Splitting necklaces.
\newblock {\em Advances in Mathematics}, 63(3):247--253, 1987.

\bibitem[\protect\citeauthoryear{Anshelevich \bgroup \em et al.\egroup
  }{2013}]{ABH13}
E.~Anshelevich, O.~Bhardwaj, and M.~Hoefer.
\newblock Friendship and stable matching.
\newblock In {\em ESA'13}, page 49–60, 2013.

\bibitem[\protect\citeauthoryear{Arcaute and Vassilvitskii}{2009}]{AV09}
E.~Arcaute and S.~Vassilvitskii.
\newblock Social networks and stable matchings in the job market.
\newblock In {\em WINE'09}, pages 220--231, 2009.

\bibitem[\protect\citeauthoryear{Beynier \bgroup \em et al.\egroup
  }{2018}]{beynier2018local}
A.~Beynier, Y.~Chevaleyre, L.~Gourv{\`e}s, J.~Lesca, N.~Maudet, and
  A.~Wilczynski.
\newblock Local envy-freeness in house allocation problems.
\newblock In {\em AAMAS'18}, pages 292--300, 2018.

\bibitem[\protect\citeauthoryear{Bil{\`o} \bgroup \em et al.\egroup
  }{2018}]{EF1-graph}
V.~Bil{\`o}, I.~Caragiannis, M.~Flammini, A.~Igarashi, G.~Monaco, D.~Peters,
  C.~Vinci, and W.~S. Zwicker.
\newblock Almost envy-free allocations with connected bundles.
\newblock In {\em Proceedings of the 10th ITCS}, volume 124 of {\em LIPIcs},
  pages 14:1--14:21, 2018.

\bibitem[\protect\citeauthoryear{Bouveret and Lang}{2008}]{bouveret2008indiv}
S.~Bouveret and J.~Lang.
\newblock Efficiency and envy-freeness in fair division of indivisible goods:
  Logical representation and complexity.
\newblock {\em JAIR}, 32:525--564, 2008.

\bibitem[\protect\citeauthoryear{Bouveret \bgroup \em et al.\egroup
  }{2016}]{BCM16}
S.~Bouveret, Y.~Chevaleyre, and N.~Maudet.
\newblock {\em Fair Allocation of Indivisible Goods}, chapter~12.
\newblock Handbook of Computational Social Choice. Cambridge University Press,
  2016.

\bibitem[\protect\citeauthoryear{Bouveret \bgroup \em et al.\egroup
  }{2017}]{FD-graph17}
S.~Bouveret, K.~Cechl\'{a}rov\'{a}, E.~Elkind, A.~Igarashi, and D.~Peters.
\newblock Fair division of a graph.
\newblock In {\em IJCAI'17}, pages 135--141, 2017.

\bibitem[\protect\citeauthoryear{Branzei \bgroup \em et al.\egroup
  }{2013}]{BPZ13}
S.~Branzei, A.D. Procaccia, and J.~Zhang.
\newblock Externalities in cake cutting.
\newblock In {\em IJCAI'13}, pages 55--61, 2013.

\bibitem[\protect\citeauthoryear{Budish}{2011}]{Bud11}
E.~Budish.
\newblock The combinatorial assignment problem: Approximate competitive
  equilibrium from equal incomes.
\newblock {\em Journal of Political Economy}, 119(6):1061--1103, 2011.

\bibitem[\protect\citeauthoryear{Cai and Daskalakis}{2011}]{CD11}
Y.~Cai and C.~Daskalakis.
\newblock On minmax theorems for multiplayer games.
\newblock In {\em Proceedings of the SODA'11}, pages 217--234. SIAM, 2011.

\bibitem[\protect\citeauthoryear{Chauhan \bgroup \em et al.\egroup
  }{2018}]{CLM18}
A.~Chauhan, P.~Lenzner, and L.~Molitor.
\newblock Schelling segregation with strategic agents.
\newblock In {\em SAGT'18}, pages 137--149, 2018.

\bibitem[\protect\citeauthoryear{Chevaleyre \bgroup \em et al.\egroup
  }{2017}]{CEM17}
Y.~Chevaleyre, U.~Endriss, and N.~Maudet.
\newblock Distributed fair allocation of indivisible goods.
\newblock {\em Artificial Intelligence}, 242:1--22, 2017.

\bibitem[\protect\citeauthoryear{Christopoulos and
  Zissimopoulos}{2004}]{CZ04overviewPls}
P.~Christopoulos and V.~Zissimopoulos.
\newblock An overview of what we can and cannot do with local search.
\newblock {\em ANNALES DU LAMSADE No 2}, 2004.

\bibitem[\protect\citeauthoryear{Damamme \bgroup \em et al.\egroup
  }{2015}]{maudetSwapPower15}
A.~Damamme, A.~Beynier, Y.~Chevaleyre, and N.~Maudet.
\newblock The power of swap deals in distributed resource allocation.
\newblock In {\em AAMAS'15}, pages 625--633, 2015.

\bibitem[\protect\citeauthoryear{Deligkas \bgroup \em et al.\egroup
  }{2014}]{DFS14}
A.~Deligkas, J.~Fearnley, R.~Savani, and P.~Spirakis.
\newblock Computing approximate nash equilibria in polymatrix games.
\newblock In {\em WINE'14}, pages 58--71, 2014.

\bibitem[\protect\citeauthoryear{Elkind \bgroup \em et al.\egroup
  }{2019}]{EGI19}
E.~Elkind, J.~Gan, A.~Igarashi, and W.~Suksompong~A.A. Voudouris.
\newblock Schelling games on graphs.
\newblock In {\em IJCAI'19}, pages 266--272, 2019.

\bibitem[\protect\citeauthoryear{Els{\"a}sser and
  Tscheuschner}{2011}]{ETmaxCut11}
R.~Els{\"a}sser and T.~Tscheuschner.
\newblock Settling the complexity of local max-cut (almost) completely.
\newblock In {\em ICALP'11}, pages 171--182. Springer, 2011.

\bibitem[\protect\citeauthoryear{Fujita \bgroup \em et al.\egroup
  }{2015}]{todocorecomplexity15}
E.~Fujita, J.~Lesca, A.~Sonoda, T.~Todo, and M.~Yokoo.
\newblock A complexity approach for core-selecting exchange with multiple
  indivisible goods under lexicographic preferences.
\newblock In {\em AAAI'15}, pages 907--913, 2015.

\bibitem[\protect\citeauthoryear{Ghodsi \bgroup \em et al.\egroup
  }{2018}]{GSS18}
M.~Ghodsi, H.~Saleh, and M.~Seddighin.
\newblock Fair allocation of indivisible items with externalities.
\newblock {\em CoRR}, abs/1805.06191, 2018.

\bibitem[\protect\citeauthoryear{Gottlob \bgroup \em et al.\egroup
  }{2005}]{gottlob_pure_2005}
G.~Gottlob, G.~Greco, and F.~Scarcello.
\newblock Pure {N}ash equilibria: {H}ard and easy games.
\newblock {\em JAIR}, 24:357--406, 2005.

\bibitem[\protect\citeauthoryear{Gourves \bgroup \em et al.\egroup
  }{2017}]{gourves2017object}
L.~Gourves, J.~Lesca, and A.~Wilczynski.
\newblock Object allocation via swaps along a social network.
\newblock In {\em IJCAI’17}, pages 213--219, 2017.

\bibitem[\protect\citeauthoryear{Gusfield and Irvin}{1989}]{GI89}
D.~Gusfield and R.~Irvin.
\newblock {\em The Stable Marriage Problem: Structure and Algorithms}.
\newblock MIT Press, 1989.

\bibitem[\protect\citeauthoryear{Hoefer}{2013}]{Hoe13}
M.~Hoefer.
\newblock Local matching dynamics in social networks.
\newblock {\em Information and Computation}, 222:20--35, 2013.

\bibitem[\protect\citeauthoryear{Janovskaya}{1968}]{Jan68}
E.B. Janovskaya.
\newblock Equilibrium points in polymatrix games.
\newblock {\em Litovskii Matematicheskii Sbornik}, 8:381--384, 1968.

\bibitem[\protect\citeauthoryear{Kearns \bgroup \em et al.\egroup
  }{2001}]{KLS01}
M.~Kearns, M.~Littman, and S.~Singh.
\newblock Graphical models for game theory.
\newblock In {\em UAI '01}, pages 253--260, 2001.

\bibitem[\protect\citeauthoryear{Lesca and Todo}{2018}]{lesca2018service}
J.~Lesca and T.~Todo.
\newblock Service exchange problem.
\newblock In {\em IJCAI'18}, pages 354--360, 2018.

\bibitem[\protect\citeauthoryear{Lonc and Truszczynski}{2018}]{LT18}
Z.~Lonc and M.~Truszczynski.
\newblock Maximin share allocations on cycles.
\newblock In {\em IJCAI'18}, pages 410--416. AAAI Press, 2018.

\bibitem[\protect\citeauthoryear{Poljak}{1995}]{localmaxcut-cubic}
S.~Poljak.
\newblock Integer linear programs and local search for max-cut.
\newblock {\em SIAM Journal of Computing}, 24(4):822--839, 1995.

\bibitem[\protect\citeauthoryear{Rahn and Sch{\"{a}}fer}{2015}]{RS15}
M.~Rahn and G.~Sch{\"{a}}fer.
\newblock Efficient equilibria in polymatrix coordination games.
\newblock In {\em Proceedings of the 40th MFCS}, pages 529--541, 2015.

\bibitem[\protect\citeauthoryear{Roth and Vate}{1990}]{roth1990random}
A.E. Roth and John H~Vande Vate.
\newblock Random paths to stability in two-sided matching.
\newblock {\em Econometrica: Journal of the Econometric Society}, pages
  1475--1480, 1990.

\bibitem[\protect\citeauthoryear{Sch{\"a}ffer and
  Yannakakis}{1991}]{schaffer1991PLS}
A.A. Sch{\"a}ffer and M.~Yannakakis.
\newblock Simple local search problems that are hard to solve.
\newblock {\em SIAM journal on Computing}, 20(1):56--87, 1991.

\bibitem[\protect\citeauthoryear{Shapley and Scarf}{1974}]{SS74}
L.~S. Shapley and H.~Scarf.
\newblock On cores and indivisibility.
\newblock {\em Journal of Mathematical Economics}, 1(1):23--37, 1974.

\bibitem[\protect\citeauthoryear{Sun \bgroup \em et al.\egroup
  }{2015}]{todo15exchange}
Z.~Sun, H.~Hata, T.~Todo, and M.~Yokoo.
\newblock Exchange of indivisible objects with asymmetry.
\newblock In {\em IJCAI'15}, pages 97--103, 2015.

\end{thebibliography}

\appendix
\section*{Appendix}


\noindent{\bf Example~\ref{ex:no-2stable}.}  Let the set of players
$N=\{1,\ldots,6\}$ and $A=\{A,B,C,D,E,F\}$. Suppose the player graph
$\pgraph=(N,\tau)$ forms a \cycle{} on $N$ consisting of the edges
$(6,1) \in \tau$ and $(i,i+1) \in \tau$ for all $i:1\leq i \leq
5$. For every edge $(i,j) \in \tau$, let $w_{i,j}=c$ for some constant
$c$. We also assume uniform valuation $v_i(a)=0$ for all $i \in N$ and
$a \in A$. Let the item graph be the structure given in
Figure~\ref{fig:no-2stable}. 
Since the item graph consists of two 3-cliques, it is sufficient to
divide players into groups of 3, assign them to any of the 3-cliques,
and check if, for any group of 3, we have a 2-stable allocation. Below
we provide the list of all such allocations and underline the blocking
pair: $((\underline{1},2,3), (\underline{4},5,6))$,
$((\underline{1},2,4),(\underline{3},5,6))$,
$((\underline{1},2,5),(\underline{3},4,6))$,
$((1,2,\underline{6}),(\underline{3},4,5))$,
$((\underline{1},3,4),(2,\underline{5},6))$,
$((\underline{1},3,5),(2,\underline{4},6))$,
$((1,3,\underline{6}),(\underline{2},4,5))$,
$((1,\underline{4},5),(\underline{2},3,6))$,
$((1,\underline{4},6),(\underline{2},3,5))$,
$((1,\underline{5},6),(\underline{2},3,4))$.


\thmSymPotential*
\begin{proof}

We can argue that the following function acts as a potential
function in the symmetric \name{}: $\phi(\pi)
 = \sum_{i \in N} (v_i(\pi) + u_i(\pi))$.

Consider an arbitrary allocation $\pi$ which is not 2-stable and let
$\pi \to_{p_1,p_2} \pi'$.  By the definition of a blocking pair, for $i
\in \{1,2\}$, $v_{p_i}(\pi') + r_{p_i}(\pi') > v_{p_i}(\pi) +
r_{p_i}(\pi)$.  Therefore, we have
$\phi(\pi') - \phi(\pi) = \sum_{p
  \in N}(u_p(\pi') + v_p(\pi') - (u_p(\pi) + v_p(\pi)))$ $= \sum_{i
  \in \{1,2\}} ((v_{p_i}(\pi') - v_{p_i}(\pi)) + 2(r_{p_i}(\pi') -
r_{p_i}(\pi))) + (v_{p_i}(\pi') - v_{p_i}(\pi))$ $= 2\sum_{i \in
  \{1,2\}} (v_{p_i}(\pi') + r_{p_i}(\pi')) - (v_{p_i}(\pi) +
r_{p_i}(\pi))$ $> 0$
\end{proof}


\thmPlsComplete*   
\begin{proof}
Consider an instance of the \name{} with uniform valuation. Without loss
of generality we assume that $v_i(a)=0$ for all $i \in N$ and $a \in
A$. In this case, the potential function defined in the proof of
Theorem~\ref{thm:sym-potential} essentially reduces to the social
welfare, i.e., $\phi(\pi)=\Sigma_{i \in N}
u_i(\pi)=\sw(\pi)$. Note that, an arbitrary allocation can be computed in 
polynomial time. Given an arbitrary allocation $\pi$, we
can compute the social welfare $\phi(\pi)$ in polynomial time. We 
can also check if $\pi$ is 2-stable in polynomial time. 
Thus, the problem of finding a 2-stable allocation is in PLS. To show
PLS-hardness, we give a \emph{tight reduction} from the PLS-complete max-cut
problem with FLIP neighbourhood \cite{schaffer1991PLS}.


Let $Q=(V,E,\{z_e\}_{e \in E})$ be an instance of the max-cut problem where $V$
is the set of vertices, $E \subseteq V \times V$ is the set of edges and
$\{z_e\}_{e \in E}$ is the set of non-negative edge weights.
We construct an instance of the symmetric \name{} in which the
underlying player graph has non-negative edge weights as follows. Let
the player graph $G = (N, \tau)$, where $N=\{v_{\txtblue} \mid v \in
V\} \cup \{ v_{\txtred} \mid v \in V\}$. That is, for every vertex $v
\in V$, there are two vertices $v_{\txtblue}$ and $v_{\txtred}$ in
$N$. Thus $|N|=2*|V|$. For each edge $(u,v) \in E$ we add two edges in
$\tau$: $(u_{\txtblue},v_{\txtblue}) \in \tau$ and
$(u_{\txtred},v_{\txtred})\in \tau$ with
$w_{u_{\txtblue}, v_{\txtblue}} = w_{u_{\txtred}, v_{\txtred}} = z_{(u,v)}$.
Let $w_{\mathit{max}} = |N|^2*(\max_{e \in E}z_e - \min_{e \in E}z_e)$.
For every vertex $v \in V$, we also add the edge
$(v_{\txtblue}, v_{\txtred}) \in \tau$ with
$w_{v_{\txtblue}, v_{\txtred}} =
w_{\mathit{max}}$.
The item graph is the complete bipartite graph $H = (A, \lambda)$
where each partition consists of $|V|$ vertices. In other words, let
the two partitions be $A_1$ and $A_2$, with vertices $A_1(1),
A_1(2),\ldots,A_1(|V|)$ and $A_2(1), A_2(2),\ldots,A_2(|V|)$). By assumption,
for all $i \in N$ and $a \in A$, $v_i(a) = 0$.

Consider any allocation $\pi$ which is a local optimum in an instance
of the \name{} constructed above. We argue that it satisfies the following
condition: For every pair of players $(v_{\txtblue}, v_{\txtred})$, the items
allocated to these players under $\pi$ will not be in the same
partition. Suppose not, assume that there exists some local optimum
such that there exists nodes $u_{\txtblue}$ and $u_{\txtred}$ such that
$\pi(u_{\txtblue})$ and $\pi(u_{\txtred})$ are in the same partition. Without
loss of generality, let us assume that both the items are in
partition $A_1$. Then there must exist some $v$ such that $v_{\txtblue}$
and $v_{\txtred}$ both possess items in $A_2$. But, $(v_{\txtblue},
u_{\txtblue})$, $(v_{\txtblue}, u_{\txtred})$, $(v_{\txtred}, u_{\txtred})$ and $(v_{\txtred},
u_{\txtblue})$ all form blocking pairs in this case. This gives us a
contradiction to the optimality of $\pi$.

By the definition of a tight reduction (\cite{CZ04overviewPls}, Definition 5.2), we need to be able to find a subset $S$ of allocations which satisfies the conditions below:
\begin{itemize}
\item {$S$ contains all 2-stable allocations of this instance of the \name{}.}
\item {For every cut $(V_1, \obar{V_1})$, it should be possible to construct, in polynomial time,  an allocation $\pi \in S$ such that $\pi$ maps to $(V_1, \obar{V_1})$.}
\item {For any pair of allocations $\pi_1, \pi_t \in S$ such that
  there is an improvement path $\pi_1, \pi_2, \ldots,\pi_{t-1}, \pi_t$
  with the allocations $\pi_2,\ldots ,\pi_{t-1} \notin S$, the cuts $(V_1,
  \obar{V_1})$ and $(V_t, \obar{V_t})$ (corresponding to the
  allocations $\pi_1$ and $\pi_t$ respectively) must either be the
  same or there must be an improvement step from $(V_1, \obar{V_1})$
  to $(V_t, \obar{V_t})$.}
\end{itemize}
 We argue that the following set of allocations satisfies these conditions:
$S = \{\pi| \nexists v \in V, \pi(v_{\txtblue}) \in A_i \text{ and } \pi({v_{\txtred}}) \in A_i \forall i \in \{1,2\} \}$.  

By the necessary condition for local optima established above, all local optima must belong to $S$. Note that, we can construct an allocation corresponding to any cut as follows:
Let $(V_1, \obar{V_1})$ be an arbitrary cut. $\pi$ is an allocation
corresponding to $(V_1, \obar{V_1})$ if $\forall v \in V_1$, $\pi(v_{\txtblue}) \in A_1$
and $\forall v \in \obar{V_1}$, $\pi(v_{\txtblue}) \in A_2$. Note that, 
the social welfare $\sw(\pi) = 4*\mathit{cutWeight}(V_1, \obar{V_1}) + 2|V|*w_{max}$. Such an allocation can clearly be constructed in polynomial time. 

Note that there can be no directed edges from $\pi \in S$ to $\pi' \in
\Pi - S$ in the transition graph of an instance of \name{} (since
$\sw(\pi') < 2|V|*w_{max}$). Thus, the third condition for a tight
reduction is trivially satisfied, which completes our proof.
\end{proof}

The standard local search algorithm starts from an initial feasible
solution and moves to better neighbours until it reaches a local
optimum. In the case of the \name{}, this implies starting with an
initial allocation and successively performing blocking pair
resolutions until we reach a local optimum. We show that, for the
\name{}, the standard local search algorithm takes exponential time in
the worst case, irrespective of the blocking pair selection rule used.

\corExpPath*
\begin{proof}
Note that, by theorem 5.15 of \cite{schaffer1991PLS}, the standard
local search algorithm takes exponential time in the worst case for
the max-cut problem. Since we have a tight reduction from the max-cut
problem to the \name{}, by lemma 3.3 of \cite{schaffer1991PLS}, the
standard local search algorithm takes exponential time in the worst
case for the \name{}.
\end{proof}

\thmSymCycle*

\begin{proof}
Without loss of generality, assume that $v_i(a) = 0$ for all $i \in N$
and $a \in A$.
Since all the vertices of the underlying player graph $G$ have degree
at most 2, for all $i \in N$, $|i_\tau| \leq 2$. Let $n = |N|$.  In
order to fix notation, let us assume that $i_\tau=\{i_l, i_r\}$ if
$|i_\tau| = 2$ and $i_\tau=\{i_l\}$ if $|i_\tau| = 1$. 
Let $c_{i,1} = \max_{j \in i_{\tau}} |w_{j, i}|$ with  
$c_{i,2} = \min_{j \in i_{\tau}} |w_{j,i}|$ if $|i_{\tau}| = 2$ and $c_{i,2} = 0$ otherwise.
	
%

Let $g: N \rightarrow N$ denote an ordering of the players (where
$g(i)$ denotes the rank of $i \in N$ in this ordering) which satisfies
the following condition: if $g(i) < g(j)$ then $c_{i,1} \geq
c_{j,1}$.  For $i \in N$, let $u_i^{\min} = \min_{\pi \in
	\Pi}u_i(\pi)$ and
$Y_i=\{u_i^{\min},u_i^{\min}+c_{i,2},u_i^{\min}+c_{i,1},u_i^{\min}+c_{i,1}+c_{i,2}\}$.
Note that, for all $\pi \in \Pi$, $u_i(\pi) \in Y_i$. For $x \in Y_i$, let

$f_i(x) = \begin{cases}
0 & \quad \text{if } x = u_i^{\min} \\
1 & \quad \text{if } x = u_i^{\min}+c_{i,2} (c_{i,2} \leq c_{i,1}) \\
2 & \quad \text{if } x = u_i^{\min}+c_{i,1} \text{ and } c_{i,1} > c_{i,2} \\
3 & \quad \text{if } x = u_i^{\min}+c_{i,1}+c_{i,2} \text{ and } c_{i,2} > 0
\end{cases}
$

Let $\sigma_i(\pi)= (2n - g(i)) f_{i}(u_{i}(\pi))$ and $P: \Pi \to
\mathbb{Z}_{\geq 0}$ be defined as: $P(\pi) = \sum_{i \in N}
\sigma_i(\pi)$.  Note that for any $\pi \in \Pi$, $P(\pi)$ is at most
$6n^2$. Let $(\pi,\pi')$ be any blocking pair resolution involving
players $(i,j)$. We now argue that $P(\pi') > P(\pi)$.
Note that, $P(\pi') - P(\pi) = \sum_{p \in i_{\tau} \cup j_{\tau} \cup
  \{i,j\}} (\sigma_p(\pi') - \sigma_p(\pi))$. 

To simplify our proof,
we assume edge weights to be non-negative. The cases where some edge
weights are negative follow similarly.  We have the following cases:
	\begin{itemize}
		\item $\nexists p \in i_{\tau}, q \in j_{\tau}$ such that $p = q$: In this case,
		we can treat the swap as independent moves. Thus, it is sufficient to analyze 
		the changes due to a single player. In the following cases, we show that
		$\sum_{p \in i_{\tau} \cup \{i\}} (\sigma_p(\pi')
		- \sigma_p(\pi)) > 0$.  
		\begin{itemize}
			\item \textit{Case 1:} $i_l \notin N(i,\pi)$ and $i_r \notin N(i,\pi)$. By the definition of the 
			blocking pair resolution, either $i_l \in N(i,\pi')$ or $i_r \in N(i,\pi')$. (Note that, if $i$'s degree is 1, $i_l \in N(i, \pi')$ must hold.)
			\item \textit{Case 2:} Without loss of generality, $i_l \in N(i, \pi)$, $i_r \notin N(i,\pi)$
			and $i_l,i_r \in N(i, \pi')$. Note that this case can only be applicable for players with degree 2. 
			\noindent It can be verified that in both these cases, $\sum_{p \in i_{\tau} \cup \{i\}} (\sigma_p(\pi') - \sigma_p(\pi)) > 0$
			\item \textit{Case 3:} Without loss of generality, $i_l \in N(i,\pi)$, $i_r \notin N(i,\pi)$ and $i_l \notin N(i,\pi')$, $i_r \in N(i,\pi')$, implying 
			$u_i(\pi) = c_{i,2} = w_{i_l,i}$ and $u_i(\pi') = c_{i,1} = w_{i_r,i}$. Note that, $\forall p \in \{i,i_r\}$ $f_p(u_p(\pi')) - f_p(u_p(\pi)) \geq 1$
			while $f_{i_l}(u_{i_l}(\pi')) - f_{i_l}(u_{i_l}(\pi)) \in \{-1,-2\}$. Note that this case can only be applicable for players with degree 2.
			\begin{itemize}
				\item If $f_{i_l}(u_{i_l}(\pi')) - f_{i_l}(u_{i_l}(\pi)) = -1$, then $\sum_{p \in i_{\tau} \cup \{i\}} (\sigma_p(\pi')
				- \sigma_p(\pi))$ $\geq (2n - g(i_r)) + (2n - g(i)) - (2n - g(i_l)) > 0$ as $1 \leq g(k) \leq n$ $\forall k \in N$.
				\item If $f_{i_l}(u_{i_l}(\pi')) - f_{i_l}(u_{i_l}(\pi)) = -2$, it implies $c_{{i_l},1} = w_{i,{i_l}}$. By the definition of the blocking pair
				resolution, $c_{i,1} > c_{{i_l},1}$. Since $w_{i,{i_r}} > c_{{i_l},1}$, we have $c_{{i_r},1} > c_{{i_l},1}$. By definition of $g$, $g({i_l}) > g({i_r})$
				and $g({i_l}) > g(i)$. Therefore, $\sum_{p \in i_{\tau} \cup \{i\}} (\sigma_p(\pi')
				- \sigma_p(\pi))$ $\geq (2n - g({i_r})) + (2n - g(i)) - 2(2n - g({i_l})) = 2g({i_l}) - (g({i_r})+g(i)) > 0$.
			\end{itemize}
		\end{itemize} 
		\item $\exists p \in i_{\tau}, q \in j_{\tau}$ such that $p = q$. Without loss of generality, let $i_l = j_l$ and $i_r \neq j_r$ (if $i_r, j_r$ exist):
		\begin{itemize}
			\item \textit{Case 4:} $i_r \notin N(i,\pi)$ and $i_l \notin N(i,\pi)$. If $i_r \in N(i,\pi')$ and $i_l \notin
			N(i,\pi')$, this is the same as Case 1. If $i_l \in N(i, \pi')$, 
			$i_l \in N(j, \pi)$ and $i_l \notin N(j, \pi')$. By the definition of the blocking pair resolution, $j_r \in N(j, \pi')$ and $j_r \notin N(j, \pi)$.
			Therefore, $\forall p \in \{i,j,j_r\}$ $f_p(u_p(\pi')) - f_p(u_p(\pi)) \geq 1$ while $f_{i_l}(u_{i_l}(\pi')) - f_{i_l}(u_{i_l}(\pi)) \geq -1$.
			Thus, $P(\pi') - P(\pi) > 0$.
			\item \textit{Case 5:} $i_r \notin N(i, \pi)$ and $i_l \in N(i,\pi)$, with $i_r \in N(i, \pi')$ and $i_l \in N(i, \pi')$:
			Therefore, $i_l \in N(j, \pi)$. By the definition of the blocking pair resolution, $j_r \in N(j, \pi')$ and $j_r \notin N(j, \pi)$. Therefore, 
			$\forall p \in \{i,j,i_r,j_r\}$ $f_p(u_p(\pi')) - f_p(u_p(\pi)) \geq 1$ while $f_{i_l}(u_{i_l}(\pi')) - f_{i_l}(u_{i_l}(\pi)) \geq 0$.
			Thus, $P(\pi') - P(\pi) > 0$.
		\end{itemize}
		\item $i_{\tau} = j_{\tau}$ with $|i_{\tau}| = 2$. In this case, it is easy to verify that a blocking pair is only possible if both $i,j$ are connected to one item initially. Let $\pi \to_{i,j} \pi'$. 
		Therefore, for all $p \in \{i,j,i_l,i_r\} (u_p(\pi') - u_p(\pi)) > 0$ implying 
		$f_p(u_p(\pi')) - f_p(u_p(\pi)) > 0$ $\forall p \in \{i,j,i_l,i_r\}$. Thus, $P(\pi') - P(\pi) > 0$. 
%
	\end{itemize}
\end{proof}

\thmSymBip*
\begin{proof}
For a symmetric \name{}, the potential function is given by $\phi(\pi)
= \sum_{i \in N} (v_i(\pi) + u_i(\pi))$
(Theorem \ref{thm:sym-potential}). Let $N_U = \{i|\pi(i) \in U\}$ and
$N_V = \{i|\pi(i) \in V\}$. Note that, for a bipartite graph, this can
be written as \\ $\phi(\pi) = \sum_{i \in N_U} (v_i(\pi) + u_i(\pi))
+ \sum_{i \in N_V} (v_i(\pi) + u_i(\pi))$ \\ $= \sum_{i \in N_U}
(2v_i(\pi) + r_i(\pi)) + \sum_{i \in N_V} (2v_i(\pi) + r_i(\pi))$ \\
$= \sum_{i \in N_U} (2v_i(\pi) + 2r_i(\pi)) + \sum_{i \in N_V}
2v_i(\pi)$ \\ $= \sum_{i \in N_U} 2u_i(\pi) + \sum_{i \in N_V}
2v_i(\pi)$ \\ $= \sum_{i \in N_U} 2v_i(\pi) + \sum_{i \in N_V}
2u_i(\pi)$. \\
%
Without loss of generality, let $|U| \leq |V|$. For each assignment
$\pi$, we can find the maximum value of $\sum_{i \in N_V} 2v_i(\pi)$
by finding the maximum weight matching of the bipartite graph $Q = (V,
N_V, V \times N_V)$ with the weight of an edge $(i, a)$ being
$v_i(a)$ (in $O(n^4)$).  
Note that there are $\frac{n!}{(n-|U|)!}$ ways of assigning
the $|U|$ items to players in $N$.  For each of these, $\sum_{i \in
  N_U} 2u_i(\pi)$ is fixed since $H$ is a complete bipartite graph,
and the maximum value of $\sum_{i \in N_V} 2v_i(\pi)$ can be computed
by reducing it to a maximum weight matching problem as described above. Thus, an allocation with the optimal potential value can be computed in
$O(n^{\text{min}(|U|,|V|)+4})$. 
\end{proof}


\thmCheckTwoStableNP*
\begin{proof}
Given an allocation $\pi$, deciding whether $\pi$ is a 2-stable
allocation can be done in polynomial time. It suffices to check if
there is a blocking pair in $\pi$. Thus the above problem is in NP. To
show hardness, we give a reduction from 3-SAT.
Let the 3-SAT instance have $q$ variables ($\{a_1,a_2,\ldots,a_q\}$) and $m$ clauses ($\{c_1,c_2,\ldots, c_m\}$). We 
create an instance of the \name{} with $|N| = 4m+2q$, with the players 
being differentiated into 6 types with
$N = C_1 \cup C_2 \cup C_3 \cup C_4 \cup S_1 \cup S_2$, where $\forall i \in \{1,2,3,4\}$ $|C_i| = m$
and $\forall j \in \{1,2\}$ $|S_j| = q$.
The key idea is to set up item valuations and the player neighbourhood 
structure such that a stable allocation can only exist if each player is 
assigned an item from the item set corresponding to its type(s) \textit{and} 
the 3-SAT instance is satisfiable. We set up the following constants to aid
our explanation: $b = (2q + 3m + 20)d = (2q + 3m + 20)^{2}e = (2q + 3m + 20)^{3}f$ with $f = 1$.
There are 4 players corresponding to 
each clause and 2 players corresponding to each variable. For each clause $c_i$, 
the players corresponding to it ($c_{i,1}$,$c_{i,2}$, $c_{i,3}$ and $c_{i,4}$)
are connected as shown in 
Figure \ref{fig:clause}, with $w_{c_{i,2}, c_{i,1}} = w_{c_{i,4}, c_{i,3}} = d - e$,
$w_{c_{i,1}, c_{i,2}} = w_{c_{i,3}, c_{i,4}} = d$, 
$w_{c_{i,4}, c_{i,1}} = w_{c_{i,2}, c_{i,3}} = d$, 
$w_{c_{i,1}, c_{i,4}} = w_{c_{i,3}, c_{i,2}} = d - e$ and 
$w_{c_{i,3}, c_{i,1}} = w_{c_{i,1}, c_{i,3}} = w_{c_{i,2}, c_{i,4}} = w_{c_{i,4}, c_{i,2}} = -2d$. 
For each variable $a_j$, there are two
players $s_{j,1}$ and $s_{j,2}$ corresponding to the positive literal $a_j$
and the negative literal $\neg a_j$ respectively. 
The edge connecting these two players has a large negative weight
($w_{s_{j,1},s_{j,2}} = w_{s_{j,2},s_{j,1}} = -d$). 
For each clause $c_i$ where a positive literal $a_j$ appears $x$ times
and the corresponding negative literal $\neg a_j$ appears $y$ times,
$w_{s_{j,1}, c_{i,1}} = (x - y)e$ and $w_{s_{j,2}, c_{i,1}} = (y - x)e$.
For example, suppose $c_i = a_{t} \vee \neg a_{u} \vee a_{v}$. ($t \neq u \neq v$) Then, 
$w_{s_{t,1}, c_{i,1}} = w_{s_{u,2}, c_{i,1}} = w_{s_{v,1}, c_{i,1}} = e$ and 
$w_{s_{t,2}, c_{i,1}} = w_{s_{u,1}, c_{i,1}} = w_{s_{v,2}, c_{i,1}} = -e$.

The item graph is $A = A_{C_1} \cup A_{C_2} \cup A_{C_3} \cup A_{C_4} \cup A_{S_1} \cup A_{S_2}$, 
with $|A_{C_i}| = m$ and $|A_{S_j}| = q$, with each of these six subsets being cliques.
Additionally, there is complete connection between $A_{S_2}$ and $A_{C_1}$, 
$A_{C_1}$ and $A_{C_2}$ and $A_{C_3}$ and $A_{C_4}$.  
The table above contains $v_p(t)$, for each player $p$ and item $t$.

	\begin{table}
        \centering
		\resizebox{0.45\textwidth}{!}{%
			\begin{tabular}{|l|l|l|l|l|l|}
				\hline
				& $p \in S_1 \cup S_2$&$p \in C_1$&$p \in C_2$&$p \in C_3$&$p \in C_4$ \\
				\hline
				$t \in A_{S_1}$ & $b$ & 0 & 0 & 0 & 0 \\
				\hline
				$t \in A_{S_2}$ & $b$ & 0 & 0 & 0 & 0 \\
				\hline
				$t \in A_{C_1}$ & 0 & $b+4e-f$ & $b-6d$ & $b$ & $b-6d$ \\
				\hline
				$t \in A_{C_2}$ & 0 & $b-6d$ & $b$ & $b-6d$ & $b - f$ \\
				\hline
				$t \in A_{C_3}$ & 0 & $b-f$ & $b-6d$ & $b$ & $b-6d$ \\
				\hline
				$t \in A_{C_4}$ & 0 & $b-6d$ & $b-f$ & $b-6d$ & $b$ \\
				\hline
				
			\end{tabular}
		}
	\end{table} 

Let us assume that the 3-SAT instance has some satisfying assignment. We can construct a 2-stable allocation
using this satisfying assignment. Let $T$ denote the set of variables which are set to true in this satisfying 
assignment, and $F$ denote the set of variables set to false in this assignment. It is easy to verify that the
following allocation $\pi$ is 2-stable: $\forall i \in \{1,2,3,4\}$ $\forall j \in C_i$ $\pi(j) \in A_{C_i}$, $\forall a_j \in T$ 
$\pi(s_{j,1}) \in A_{S_2}$ and $\pi(s_{j,2}) \in A_{S_1}$ and $\forall a_j \in F$ 
$\pi(s_{j,2}) \in A_{S_2}$ and $\pi(s_{j,1}) \in A_{S_1}$. Note that, all players in $S_1 \cup S_2 \cup C_2 \cup C_4$ have the maximum utility 
attainable, and hence, cannot be a part of any blocking pair. Thus, the only possible blocking pair is between players $i,j \in C_1 \cup C_3$.
Since any two players $x,y \in C_i$ are connected to the same set of players, and their valuations for all items in $A_{C_i}$ are the same, 
they cannot form a blocking pair. Thus, the only possible blocking pair is between players $x, y$ where, without loss of generality, $x \in C_1$
and $y \in C_3$. Since the assignment is satisfying, $v_{x}(\pi(x)) \geq b+d+e-f > b+d \geq v_{x}(\pi(y))$. Hence, this is a 2-stable allocation.

To complete the proof, we simply need to show that every 2-stable allocation will generate a satisfying assignment for the 3-SAT instance. We first list necessary conditions for a 2-stable allocation.
\begin{claim}
	\label{cl1}
	For every 2-stable allocation $\pi$, it is necessary that: 
	\begin{itemize}
		\item  For all $p \in S_1 \cup S_2$, $\pi(p) \in A_{S_1} \cup A_{S_2}$ 
		\item For all $i \in \{1,2,3,4\}$, for all $p \in C_i$, $\pi(p) \in A_{C_i}$.
		\item There does not exist $i$ s.t. $\pi(s_{i,1}) \in S_k$ and $\pi(s_{i,2}) \in S_k$ for some $k \in \{1,2\}$. 
	\end{itemize}
\end{claim}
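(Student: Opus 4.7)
The plan is to establish the three conditions in sequence. At each step I assume the earlier conditions hold, then show that violating the current condition produces a blocking pair, contradicting 2-stability. The entire argument rests on the telescoping scale separation $b \gg d \gg e \gg f = 1$, each gap a factor of at least $2q + 3m + 20$, so that a valuation difference at one level strictly dominates externality effects at the next lower level.

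Condition 1 follows by pigeonhole. If some $p \in S_1 \cup S_2$ has $\pi(p) \in A_{C_j}$, then since $|S_1 \cup S_2| = |A_{S_1} \cup A_{S_2}| = 2q$ some $p' \in C_1 \cup C_2 \cup C_3 \cup C_4$ must hold an item in $A_{S_1} \cup A_{S_2}$. Both currently have valuation $0$; after swapping, $p$ has valuation $b$ and $p'$ has valuation at least $b - 6d$ (the minimum entry in the $C$-rows of the valuation table). The total magnitude of incoming edge weights into $p$ is just $d$ (the single $-d$ edge within its variable pair), and into $p'$ at most $4d + O(qe)$, both much smaller than $b = (2q + 3m + 20)d$. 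Hence both players strictly gain, giving a blocking pair.

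Condition 2 is the main obstacle. Form the $4 \times 4$ type matrix $(a_{kl})$ counting $C_k$-players holding $A_{C_l}$-items, whose row and column sums both equal $m$; misalignment means off-diagonal entries, which form balanced cycles. The easy case is a reciprocal mismatch $a_{kl}, a_{lk} > 0$ with $\{k, l\}$ crossing the groups $\{C_1, C_3\}$ and $\{C_2, C_4\}$: the direct swap yields a valuation jump of order $6d$ on each side, dominating the intra-clause externality changes of order $O(d)$ and the $s$-$c$ residual of order $O(me) \ll d$. The harder sub-cases are within-group reciprocal mismatches (where the gap collapses to $4e - f$ or $f$) and longer type cycles with no 2-cycle. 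For these I would pick the two swap partners in distinct clauses, so that externality accounting splits into two independent clause-gadget contributions plus an $O(me)$ $s$-$c$ residual, and then exploit the explicit gadget edge weights to orient a strictly beneficial swap. I expect this case analysis to be the main obstacle, because the $O(f)$-level valuation gaps are dwarfed by most externality shifts unless the swap partner is chosen very carefully.

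Condition 3 is analogous to Condition 1 but at the $d$-vs-$me$ scale. Assuming Conditions 1 and 2, suppose $s_{i,1}, s_{i,2}$ both sit on side $A_{S_k}$. Double-counting the remaining $q - 1$ variables by whether their two players are both on side $k$, both on side $3-k$, or split (call these counts $x, y, z$), the identities $2x + z = q - 2$ and $2y + z = q$ force $y = x + 1 \geq 1$; so some $j \neq i$ has both $s_{j,1}, s_{j,2}$ on the opposite side $A_{S_{3-k}}$. Since all $s$-$c$ edges are directed out of the $s$-players, each $s$-player receives incoming externality only from the $-d$ edge coming from its own pair-partner. Swapping $s_{i,1}$ with $s_{j,1}$ leaves every valuation at $b$ but severs both $-d$ contributions (each pair-partner now sits in the opposite $A_S$-clique and so is no longer in the item-graph neighbourhood), so each swapper gains exactly $+d$ in utility, giving the required blocking pair.
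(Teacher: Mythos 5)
Your Conditions 1 and 3 are sound and essentially reproduce the paper's Lemmas~\ref{StableL1} and~\ref{StableL5}: the pigeonhole-plus-scale-separation argument for keeping $S$-players on $A_{S_1}\cup A_{S_2}$, and the double-counting of variable pairs followed by the swap that severs both $-d$ edges, are exactly what is needed (and your observation that $s$-players have no incoming externality other than the $-d$ edge from their partner is the key fact there). Your ``easy case'' of Condition 2 also matches the paper's Lemma~\ref{StableL2}: a group-level count always yields a reciprocal mismatch between $\{C_1,C_3\}$ and $\{C_2,C_4\}$, and the $6d$ valuation jump beats the at-most-$4d+O(e)$ externality swing on each side, so the ``longer type cycles'' you worry about at the group level never actually arise.

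The genuine gap is the within-group part of Condition 2, which you explicitly defer. This is the heart of the claim, and it is not a routine continuation of the same scale argument: once a $C_1$-player holds an $A_{C_3}$-item (and, by counting, some $C_3$-player holds an $A_{C_1}$-item), the valuation incentives collapse to $4e$, $0$, or $f$, and are dominated by the $O(d)$ gadget externalities, so the blocking pair cannot in general be the mismatched pair itself and cannot be found by valuations alone. The paper's Lemmas~\ref{StableL3} and~\ref{StableL4} resolve this by a case analysis on where the \emph{other} two players of the affected clause gadget sit: in some cases the blocking pair is $(c_{j,1},c_{j,3})$ within one clause, in others it is $(c_{j,2},c_{j,4})$ or even a pair $(c_{j,2},c_{l,4})$ drawn from two different clauses and of types $C_2,C_4$ rather than $C_1,C_3$ --- so the $C_1/C_3$ and $C_2/C_4$ separations are proved in an interleaved way. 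Your proposed heuristic of ``always pick the two swap partners in distinct clauses'' does not match this and would fail in the $i=j$ sub-cases, where the profitable deviation lives entirely inside one gadget. Until this case analysis is carried out with the explicit weights $d$, $d-e$, $-2d$, the necessity of $\pi(p)\in A_{C_i}$ for $p\in C_i$ --- and hence the claim --- is not established.
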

The proof of the claim follows from the lemmas given below:
\begin{lemma}
	\label{StableL1}
	For any 2-stable allocation $\pi$, $\forall x \in S_1 \cup S_2$, $\pi(x) \in A_{S_1} \cup A_{S_2}$ and, 
	$\forall i \in \{1,2,3,4\}$ $\forall y \in C_i$ $\pi(y) \in A_{C_1} \cup A_{C_2} \cup A_{C_3} \cup A_{C_4}$.
\end{lemma}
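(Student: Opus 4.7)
The plan is to derive a contradiction by showing that whenever the lemma fails, a blocking pair exists. The starting observation is a pigeonhole argument: since $|S_1 \cup S_2| = 2q = |A_{S_1} \cup A_{S_2}|$ and $|C_1 \cup \ldots \cup C_4| = 4m = |A_{C_1} \cup \ldots \cup A_{C_4}|$, the two assertions in the lemma are in fact equivalent. So I would assume there exist $x \in S_1 \cup S_2$ with $\pi(x) \in A_{C_i}$ for some $i$, and correspondingly $y \in C_j$ (some $j$) with $\pi(y) \in A_{S_k}$, and aim to show that swapping their items strictly benefits both.

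Given such a mismatched pair, reading off the valuation table shows that the intrinsic gains from the swap $\pi \to_{x,y} \pi'$ are substantial: $v_x(\pi(y)) - v_x(\pi(x)) = b - 0 = b$, and $v_y(\pi(x)) - v_y(\pi(y)) \geq (b - 6d) - 0 = b - 6d$, since $b - 6d$ is the smallest entry in any $C$-player row for a $C$-type item. The main obstacle is to bound the swing in the externalities $r_x, r_y$ and show it is dominated by these intrinsic gains. For any variable player $s \in S_1 \cup S_2$, the only incoming edge in $G$ has weight $-d$ (from its paired variable player), so $|r_s| \leq d$ and its swing under any swap is at most $2d$. For a clause player $c \in C_1 \cup \ldots \cup C_4$, the incoming edges come from up to three clause-gadget neighbours, with weights drawn from $\{d, d-e, -2d\}$, summing in magnitude to at most $4d$; and for $c_{i,1}$ only, from at most three variable-player edges of magnitude at most $e$ each, using that a clause has three literals and hence $|x-y| \leq 3$ in the edge-weight formula. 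This yields $|r_c| \leq 4d + 3e$ and a swing of at most $8d + 6e$.

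Combining the intrinsic and external bounds,
\[
u_x(\pi') - u_x(\pi) \;\geq\; b - 2d, \qquad u_y(\pi') - u_y(\pi) \;\geq\; (b - 6d) - (8d + 6e) \;=\; b - 14d - 6e,
\]
both of which are strictly positive thanks to the scale separation $b = (2q + 3m + 20)d$ and $d = (2q + 3m + 20)e$ baked into the constants (recall $q, m \geq 1$, so $2q + 3m + 20 > 14$). Hence $(x, y)$ is a blocking pair, contradicting the 2-stability of $\pi$ and completing the proof. The delicate part of the argument is the bookkeeping of incoming edge weights for each player type; once those bounds are in hand, the dominance of $b$ over all lower-order terms closes the argument almost mechanically.
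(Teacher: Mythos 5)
Your proof is correct, and in fact supplies an argument the paper omits entirely: the authors simply declare that ``the proofs of Lemmas \ref{StableL1} and \ref{StableL2} are straightforward'' and move on. Your pigeonhole observation (that a misplaced $S$-player forces a misplaced $C$-player and vice versa, since $|S_1\cup S_2|=|A_{S_1}\cup A_{S_2}|=2q$) followed by the scale-separation bound $b \gg d \gg e \gg f$ is exactly the intended route. One small bookkeeping correction: a player $c_{i,1}$ can receive incoming edges from \emph{both} $s_{j,1}$ and $s_{j,2}$ for each of up to three variables, with weights $(x-y)e$ and $(y-x)e$ where $|x-y|$ can be as large as $3$; the total magnitude of these variable-player contributions is bounded by $2\sum_j|x_j-y_j|\,e \le 6e$ rather than $3e$. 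This only loosens your externality swing to at most $8d+12e$, which is still swamped by $b-6d$ since $b=(2q+3m+20)d\ge 25d$, so the blocking pair and the contradiction go through unchanged.
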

\begin{lemma}
	\label{StableL2}
	For any 2-stable allocation $\pi$, for each player $p \in C_1 \cup C_3$, $\pi(p) \in A_{C_1} \cup A_{C_3}$ and for each player $q \in C_2 \cup C_4$, $\pi(q) \in A_{C_2} \cup A_{C_4}$.
\end{lemma}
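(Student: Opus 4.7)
The plan is to argue by contradiction via a pigeonhole step followed by the exhibition of a blocking pair. Suppose $\pi$ is 2-stable and, for contradiction, that some player $p \in C_1 \cup C_3$ has $\pi(p) \in A_{C_2} \cup A_{C_4}$. By Lemma \ref{StableL1}, every $C_i$-player is assigned an item of $A_{C_1} \cup A_{C_2} \cup A_{C_3} \cup A_{C_4}$, and since $|C_1 \cup C_3| = |A_{C_1} \cup A_{C_3}| = 2m$, the existence of such a $p$ forces some $q \in C_2 \cup C_4$ to satisfy $\pi(q) \in A_{C_1} \cup A_{C_3}$. The symmetric situation, where some $q \in C_2 \cup C_4$ is the initial witness of a misallocation, is handled by the same pigeonhole in reverse.

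I would then show that $(p,q)$ blocks $\pi$ by analysing the swap $\pi \to_{p,q} \pi'$. Reading off the valuation table, every $C_1 \cup C_3$ player values items in $A_{C_1} \cup A_{C_3}$ at $b$, $b-f$ or $b+4e-f$ but items in $A_{C_2} \cup A_{C_4}$ at $b-6d$; symmetrically for $C_2 \cup C_4$ players with the two sides interchanged. Hence, irrespective of which specific sub-case $p$ and $q$ fall into,
\[
v_p(\pi') - v_p(\pi) \;\geq\; 6d - f \quad \text{and} \quad v_q(\pi') - v_q(\pi) \;\geq\; 6d - f.
\]

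For the externality component I would bound $|r_{p'}(\pi') - r_{p'}(\pi)| \leq \sum_{j \in p'_\tau} |w_{j,p'}|$ for both $p' = p$ and $p' = q$. At any clause player the three gadget edges sum to at most $|d-e| + d + |{-2d}| = 4d - e$ in absolute value; at a $c_{i,1}$-type player, in addition, at most six variable edges of absolute weight $e$ each contribute a further $6e$. Thus any clause player's externality can change by at most $4d + 5e$ under a single swap, regardless of whether $p$ and $q$ lie in the same gadget or share variable neighbours. Combining, $u_p(\pi') - u_p(\pi) \geq (6d-f) - (4d+5e) = 2d - 5e - f > 0$, because the chosen scaling $d = (2q+3m+20)^2$, $e = 2q+3m+20$, $f = 1$ makes $2d$ dwarf $5e + f$. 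The same lower bound holds for $q$, so $(p,q)$ is a blocking pair, contradicting the 2-stability of $\pi$.

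The main obstacle is establishing the externality envelope uniformly across all configurations of $p$ and $q$ — in particular when they lie in the same clause gadget (so that the direct edge between them and the common neighbours $c_{i,3}, c_{i,4}$ are simultaneously affected), and when variable players serve as shared neighbours of both. Once the absolute-value bound is secured, the rest reduces to a single arithmetic comparison in which the valuation gap of $6d$ comfortably dominates the externality swing of $4d + O(e)$, which is precisely the effect the carefully chosen ratios $b \gg d \gg e \gg f$ were designed to guarantee.
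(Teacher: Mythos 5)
Your proof is correct, and since the paper omits this argument entirely (it only remarks that Lemmas~\ref{StableL1} and~\ref{StableL2} are ``straightforward''), yours supplies exactly the intended reasoning: the pigeonhole step via Lemma~\ref{StableL1} produces the mismatched partner $q$, the valuation gain of at least $6d-f$ on each side is read correctly from the table, and your uniform absolute-value bound of $4d+5e$ on any clause player's externality swing (valid whether or not $p$ and $q$ share a gadget or variable neighbours) yields a strict gain of at least $2d-5e-f>0$ for both, i.e.\ a blocking pair. No gaps.
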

\begin{lemma}
	\label{StableL3}
	For any 2-stable allocation $\pi$, for each player $p \in C_1$, $\pi(p) \in A_{C_1}$ and,
	for each player $q \in C_3$, $\pi(q) \in A_{C_3}$.
\end{lemma}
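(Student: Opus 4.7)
The plan is a proof by contradiction: assume $\pi$ is 2-stable but some $p \in C_1$ has $\pi(p) \in A_{C_3}$. By Lemma~\ref{StableL2}, $\pi$ restricted to $C_1 \cup C_3$ is a bijection onto $A_{C_1} \cup A_{C_3}$; hence the number of misplaced $C_1$ players equals the number of misplaced $C_3$ players, and both are positive. I would refine this counting gadget-by-gadget and split into sub-cases depending on whether the misplacements are ``paired'' inside a single clause gadget (i.e.\ $c_{i,1} \in A_{C_3}$ happens together with $c_{i,3} \in A_{C_1}$) or not.

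\emph{Unpaired case.} If one can select misplaced $p = c_{i,1} \in A_{C_3}$ and $q = c_{j,3} \in A_{C_1}$ whose gadget partners $c_{i,3} \in A_{C_3}$ and $c_{j,1} \in A_{C_1}$ are correctly placed, I would show the swap $(p,q)$ is a blocking pair. The key ingredients are: (i) $p$ gains exactly $4e$ in valuation and $q$'s valuation is unchanged; (ii) a direct computation with the incoming edges to $c_{j,3}$ (weights $-2d$, $d$, $d-e$ from $c_{j,1}, c_{j,2}, c_{j,4}$) and the item-graph adjacencies ($A_{C_1}$ neighbours $A_{C_1} \cup A_{C_2} \cup A_{S_2}$, while $A_{C_3}$ neighbours $A_{C_3} \cup A_{C_4}$) shows that $q$'s externality strictly grows by at least $e$ regardless of the positions of $c_{j,2}$ and $c_{j,4}$; and (iii) for $p$, the $+2d$ externality contribution coming from $c_{i,3} \in A_{C_3}$ ceasing to be a $(-2d)$-neighbour dominates the $O(d)$ changes due to $c_{i,2}, c_{i,4}$ and the $O(e)$ perturbations from $s$-edges under the hierarchy $f \ll e \ll d$.

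\emph{Paired case.} If instead every misplaced $c_{j,3} \in A_{C_1}$ is accompanied by $c_{j,1} \in A_{C_3}$, the direct $C_1$--$C_3$ swap inside such a clause $j$ is no longer helpful, so I would look for a blocking pair inside the gadget among the $C_2, C_4$ players. When $c_{j,2} \in A_{C_2}$ and $c_{j,4} \in A_{C_4}$, the swap $(c_{j,2}, c_{j,4})$ is blocking: each valuation drops by $f$, but the swapped positions of $c_{j,1}, c_{j,3}$ upgrade each of their externality contributions from $d-e$ to $d$, for a net gain of $e - f > 0$ on each side. If one or both of $c_{j,2}, c_{j,4}$ is itself displaced inside $A_{C_2} \cup A_{C_4}$, a further pigeonhole count on $A_{C_2} \cup A_{C_4}$ supplies a cross-gadget partner $c_{k,2}$ or $c_{k,4}$, and the resulting cross-gadget swap yields an order-$2d$ gain on both sides.

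The main obstacle I foresee is the detailed case analysis, since the externality of any candidate swap entangles contributions from $s$-players (with $\pm e$-scale perturbations) and from potentially misaligned $C_2, C_4$ players in neighbouring gadgets. The constants $b = (2q+3m+20)d = (2q+3m+20)^{2}e = (2q+3m+20)^{3}f$ are engineered so that, in each of the finitely many local configurations, the dominant term in the utility change of the chosen swap---one of $+2d$, $+4e$, or $e - f$---strictly exceeds the total perturbation from all other edges. Verifying this bound uniformly across all configurations is the bulk of the work; once done, every bad configuration admits a blocking pair, contradicting the 2-stability of $\pi$.
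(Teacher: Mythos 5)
Your case split (paired vs.\ unpaired misplacement, then sub-cases on $c_{j,2},c_{j,4}$) mirrors the paper's, and your unpaired case is sound: you swap the two misplaced players $c_{i,1}$ and $c_{j,3}$ directly (the paper instead swaps their correctly placed partners $c_{i,3}$ and $c_{j,1}$), and your accounting --- $+4e$ in valuation and $+2d$ from deactivating the $-2d$ edge for $c_{i,1}$, and a gain of at least $e$ for $c_{j,3}$ from deactivating its $-2d$ edge to $c_{j,1}\in A_{C_1}$ --- survives the worst-case placements of $c_{i,2},c_{i,4}$ and the $\pm 3e$ from the $s$-players. Your sub-case with $c_{j,2}\in A_{C_2}$, $c_{j,4}\in A_{C_4}$ is also correct and is exactly the paper's argument (net gain $e-f>0$ on each side).

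The genuine gap is the paired sub-case in which $c_{j,2}$ and $c_{j,4}$ are swapped \emph{within} the gadget, i.e.\ $c_{j,2}\in A_{C_4}$ and $c_{j,4}\in A_{C_2}$. Here your pigeonhole argument produces nothing: the counts in $A_{C_2}$ and $A_{C_4}$ remain balanced (gadget $j$ still contributes one player to each), so all other gadgets may be perfectly placed and no cross-gadget partner need exist. Worse, even when a $C_2$ player sitting on $A_{C_2}$ is available, moving $c_{j,2}$ from $A_{C_4}$ back to $A_{C_2}$ is strictly \emph{harmful} to $c_{j,2}$: its utility drops from $b-f+d$ (it currently collects $+d$ from $c_{j,1}\in A_{C_3}$ via the $A_{C_3}$--$A_{C_4}$ connection) to $b+(d-e)-2d$, because $c_{j,4}\in A_{C_2}$ becomes an item-graph neighbour and activates the $-2d$ edge. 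So no order-$2d$ gain is on offer; the swap loses order $2d$. The repair is that in precisely this configuration the within-gadget swap $(c_{j,1},c_{j,3})$ --- which you dismissed for the whole paired case --- \emph{is} blocking: with $c_{j,4}\in A_{C_2}$ and $c_{j,2}\in A_{C_4}$, moving $c_{j,1}$ to $A_{C_1}$ gains $4e$ in valuation and trades an externality of $d-e$ (from $c_{j,2}\in A_{C_4}$) for one of at least $d-3e$ (from $c_{j,4}\in A_{C_2}$ minus the $s$-perturbation), a net gain of at least $2e$, while $c_{j,3}$ moving to $A_{C_3}$ trades $d-e$ for $d$, gaining $e$. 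You correctly observed that this swap fails when $c_{j,2},c_{j,4}$ are correctly placed, but the two sub-cases need different blocking pairs, and your proposal supplies a working pair for only one of them.
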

\begin{lemma}
	\label{StableL4}
	For any 2-stable allocation $\pi$, for each player $p \in C_2$, $\pi(p) \in A_{C_2}$ and, for each player $q \in C_4$, $\pi(q) \in A_{C_4}$.
\end{lemma}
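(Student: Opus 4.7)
The plan is to argue by contradiction via a case analysis within each clause gadget. By Lemmas~\ref{StableL1}--\ref{StableL3}, every $c_{i,1}$ holds an item in $A_{C_1}$, every $c_{i,3}$ holds an item in $A_{C_3}$, and all players in $C_2 \cup C_4$ receive items from $A_{C_2} \cup A_{C_4}$. Since $|C_2| = |C_4| = |A_{C_2}| = |A_{C_4}| = m$, it suffices to prove, for each clause $i$, that $\pi(c_{i,2}) \in A_{C_2}$ and $\pi(c_{i,4}) \in A_{C_4}$. For each $i$, I would distinguish four cases based on the cluster memberships of $\pi(c_{i,2})$ and $\pi(c_{i,4})$: (a) both ``correct''; (b) both ``swapped'' ($c_{i,2} \in A_{C_4}$ and $c_{i,4} \in A_{C_2}$); (c) both in $A_{C_2}$; (d) both in $A_{C_4}$. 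The aim is to rule out (b), (c), (d).

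First I would eliminate case (b) directly: recalling that the item graph contains only the cross-cluster connections $A_{C_1} \sim A_{C_2}$ and $A_{C_3} \sim A_{C_4}$, a short computation using the gadget weights shows that the intra-clause swap $(c_{i,2}, c_{i,4})$ strictly improves both players by exactly $f + e > 0$, so $(c_{i,2}, c_{i,4})$ is a blocking pair. Next I would show that a clause $i$ in case (c) and a clause $j$ in case (d) cannot coexist. The two gadgets share no edges in the player graph, so the externality terms decouple; a direct computation shows that $c_{i,4}$ has utility $b - d - e - f$ under (c) and $c_{j,2}$ has the same value under (d), while after swapping their items both attain utility $b + d$, a net gain of $2d + e + f > 0$. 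Hence $(c_{i,4}, c_{j,2})$ is a blocking pair, again contradicting 2-stability.

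Finally, a cardinality count closes the argument. Writing $m_a, m_b, m_c, m_d$ for the number of clauses in each case, $m_a + m_b + m_c + m_d = m$, while the $m$ items of $A_{C_2}$ and the $m$ of $A_{C_4}$ force $m_a + m_b + 2 m_c = m_a + m_b + 2 m_d = m$, hence $m_c = m_d$. Combined with $m_b = 0$ and the fact that $m_c, m_d$ cannot both be positive, we get $m_c = m_d = 0$, so every clause is in case (a), giving the claim. The main obstacle is the careful externality bookkeeping under the mild asymmetry between the weights $d$ and $d - e$ in the gadget, tracking which pairs of items are actually adjacent in the item graph, to confirm that each swap I name is truly a blocking pair.
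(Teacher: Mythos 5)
Your proposal is correct and matches the paper's argument in substance: your case (b) is the paper's ``$i=j$'' case (resolved by the intra-clause blocking pair $(c_{i,2},c_{i,4})$), and your (c)/(d)-coexistence step is the paper's ``$i\neq j$'' case (resolved by the same cross-clause blocking pair), with your utility values $b+d-e-f$ and $b-d-e-f$ agreeing with the paper's bounds. The only cosmetic difference is that you make the cardinality bookkeeping ($m_c=m_d$) explicit, where the paper simply invokes Lemma~\ref{StableL2} and counting to produce the counterpart player $c_{j,4}$.
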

\begin{lemma}
	\label{StableL5}
	For any 2-stable allocation $\pi$, for each pair of players $s_{i,1}$ and $s_{i,2}$, 
	if $\pi(s_{i,1}) \in A_{S_k}$, $\pi(s_{i,2}) \in A_{S_{3-k}}$ for $k \in \{1,2\}$.
\end{lemma}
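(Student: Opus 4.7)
The plan is to argue by contradiction: suppose $\pi$ is 2-stable yet some index $i$ satisfies $\pi(s_{i,1}), \pi(s_{i,2}) \in A_{S_k}$ for the same $k \in \{1,2\}$. (By Lemma~\ref{StableL1}, both items already lie in $A_{S_1} \cup A_{S_2}$, so the only way the conclusion can fail is for both to fall on the same side.) The first step is a short utility computation. Inspecting the player graph, the only $\tau$-neighbour contributing to the utility of $s_{i,l}$ is its sibling $s_{i,3-l}$: the edges of the form $(s_{j,l}, c_{i,1})$ are outgoing from the $s$-players and so do not enter $r_{s_{j,l}}$, and no $c$-player has an edge into an $s$-player. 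Since $A_{S_k}$ is a clique in $H$, the sibling is $H$-connected to $s_{i,l}$ under $\pi$ and contributes weight $-d$; hence $u_{s_{i,1}}(\pi) = u_{s_{i,2}}(\pi) = b - d$.

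Next, by a counting argument, I would locate a variable pair whose items both lie on the opposite side $A_{S_{3-k}}$. By Lemma~\ref{StableL1}, the $2q$ players of $S_1 \cup S_2$ occupy exactly the $2q$ items of $A_{S_1} \cup A_{S_2}$. Partition the remaining $q-1$ variable pairs $\{s_{j,1},s_{j,2}\}$ with $j \neq i$ into three classes: $\alpha$ split pairs (one partner on each side), $\beta$ pairs wholly in $A_{S_{3-k}}$, and $\gamma$ pairs wholly in $A_{S_k}$. Counting items gives $\alpha + 2\gamma = q - 2$ on the $A_{S_k}$ side and $\alpha + 2\beta = q$ on the $A_{S_{3-k}}$ side, so $\beta - \gamma = 1$ and in particular $\beta \geq 1$. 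Pick any $j \neq i$ with $\pi(s_{j,1}), \pi(s_{j,2}) \in A_{S_{3-k}}$; by the same utility computation as before, $u_{s_{j,1}}(\pi) = b - d$.

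Finally I would verify that $(s_{i,1}, s_{j,1})$ is a blocking pair in $\pi$. After the swap, $s_{i,1}$ holds an item in $A_{S_{3-k}}$ while its sibling $s_{i,2}$ remains in $A_{S_k}$; because $A_{S_1}$ and $A_{S_2}$ are not connected in the item graph (the only cross-partition edges incident to these two sets run from $A_{S_2}$ to $A_{C_1}$), the sibling penalty disappears and $u_{s_{i,1}}$ jumps from $b - d$ to $b$. A symmetric computation shows that $u_{s_{j,1}}$ also rises from $b - d$ to $b$, so both strictly gain. This contradicts 2-stability and completes the argument. I expect the bookkeeping in the counting step to be the most delicate ingredient; once $\beta \geq 1$ is established, the blocking-pair check is a direct consequence of the clique structure of each $A_{S_k}$ together with the absence of $A_{S_1}$--$A_{S_2}$ edges in $H$.
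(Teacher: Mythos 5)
Your proof is correct and follows essentially the same route as the paper's: assume a pair is doubled up in $A_{S_k}$, use Lemma~\ref{StableL1} plus counting to find another pair doubled up in $A_{S_{3-k}}$, and exhibit $(s_{i,1}, s_{j,1})$ as a blocking pair whose members each move from utility $b-d$ to $b$. The paper states the existence of the opposite doubled pair without the explicit $\alpha,\beta,\gamma$ bookkeeping, but your counting argument is exactly the justification it implicitly relies on.
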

We can generate an assignment for the 3-SAT instance using a 2-stable allocation as follows:
Let $X = \{i \in S_1 \cup S_2 \mid \pi(i) \in A_{S_2}\}$. Since the allocation is 2-stable, assuming Claim \ref{cl1}, 
for each variable $a_j$ in the 3-SAT instance, exactly one of $s_{j,1}$ and $s_{j,2}$ must be in $X$. 
If $s_{j,1} \in X$, we set $a_j$ to true, otherwise, we set it to false. We will show that this constitutes 
a satisfying assignment for the 3-SAT instance.
Let $X_i = $ Number of satisfying literals for clause 
$c_i$ in the generated assignment and $Y_i = $ Number of 
unsatisfying literals for clause $c_i$ in the generated assignment.
Suppose there exists some 2-stable allocation $\pi$. Let us assume, 
to the contrary, that the generated assignment does not satisfy the 3-SAT instance.   
Therefore, there must exist some clause $c_i$ which is not satisfied. 
However, note that $u_{c_{i,1}}(\pi) = (b+d+3e-f) + e(X_i - Y_i)$ and $u_{c_{i,3}}(\pi) = b+d-e$,
while $u_{c_{i,1}}(\pi') = b+d$ and $u_{c_{i,3}}(\pi') = b+d-f$ where $\pi'(c_{i,1}) = \pi(c_{i,3})$, 
$\pi'(c_{i,3}) = \pi(c_{i,1})$ and $\forall x \in ((S_1 \cup S_2 \cup C_1 \cup C_2 \cup C_3 \cup C_4) - \{c_{i,1},c_{i,3}\})$ $\pi'(x) = \pi(x)$.
But, this implies that $c_{i,1}$ and $c_{i,3}$ form a blocking pair with $\pi \to_{c_{i,1},c_{i,3}} \pi'$. Thus, $\pi$ is not 2-stable, 
leading to the contradiction.
\end{proof}

The proofs of Lemmas \ref{StableL1} and \ref{StableL2} are
straightforward. We give the proofs of
Lemmas \ref{StableL3}, \ref{StableL4} and \ref{StableL5} below.

\begin{proof}[Proof of Lemma \ref{StableL3}]
Suppose there exists some player $c_{i,1} \in C_1$ such that $\pi(c_{i,1}) \in A_{C_3}$. By lemma  \ref{StableL2}, there must exist some player $c_{j,3} \in C_3$ such that $\pi(c_{j,3}) \in A_{C_1}$. There are two cases:
\begin{itemize}
	\item $\exists i, j$ such that $i = j$: There are 3 cases now:
	\begin{itemize}
		\item $\pi(c_{j,2}) \in A_{C_k}$ and $\pi(c_{j,4}) \in A_{C_k}$ for some $k \in \{2,4\}$: 
		Since $\pi$ is 2-stable, there must be some $l$ such that 
		$\pi(c_{l,2}) \in A_{C_{6-k}}$ and $\pi(c_{l,4}) \in A_{C_{6-k}}$. 
		But, in this case, $c_{j,2}$ and $c_{l,4}$ will form a blocking pair with 
		$\pi \to_{c_{j,2},c_{l,4}} \pi'$ as $u_{c_{j,2}}(\pi) \leq b - d < b - f \leq u_{c_{j,2}}(\pi')$.
		Similarly, $u_{c_{l,4}}(\pi) \leq b - e < b - f \leq u_{c_{l,4}}(\pi')$. 
		Hence, this is not a 2-stable allocation.
		\item $\pi(c_{j,2}) \in A_{C_2}$ and $\pi(c_{j,4}) \in A_{C_4}$: In this case, $c_{j,2}$ and $c_{j,4}$ form a blocking pair, with
		$\pi \to_{c_{j,2},c_{j,4}} \pi'$ as $u_{c_{j,2}}(\pi) \leq b+d-e < b+d-f \leq u_{c_{j,2}}(\pi')$ and
		$u_{c_{j,4}}(\pi) \leq b+d-e < b+d-f \leq u_{c_{j,4}}(\pi')$. Thus, this is not a 2-stable allocation.
		\item $\pi(c_{j,2}) \in A_{C_4}$ and $\pi(c_{j,4}) \in A_{C_2}$: In this case, $c_{j,1}$ and $c_{j,3}$ form a blocking pair, 
		with $\pi \to_{c_{j,1},c_{j,3}} \pi'$ as $u_{c_{j,3}}(\pi)) \leq b+d-(e+f) < b+d \leq u_{c_{j,3}}(\pi')$ and
		$u_{c_{j,1}}(\pi) \leq b+d-e < b+d+e-f \leq u_{c_{j,1}}(\pi')$. Thus, this is not a 2-stable allocation.
	\end{itemize}
	\item $\nexists i,j$ such that $i \neq j$:
	In this case, $c_{i,3}$ and $c_{j,1}$ form a blocking pair with $\pi \to_{c_{i,3},c_{j,1}} \pi'$ as 
	$u_{c_{i,3}}(\pi) \leq b-(e+f) < b-f \leq u_{c_{i,3}}(\pi')$ and $u_{c_{j,1}}(\pi) \leq b-(e+f) < b+e-f \leq u_{c_{j,1}}(\pi')$. Thus, this is not a 2-stable allocation.
\end{itemize}
\end{proof}

\begin{proof}[Proof of Lemma \ref{StableL4}]
Suppose there exists some player $c_{i,2} \in C_2$ such that $\pi(c_{i,2}) \in A_{C_4}$.  By lemma \ref{StableL2}, there must exist some $c_{j,4} \in C_4$ such that $\pi(c_{j,4}) \in A_{C_2}$. There are two cases:
\begin{itemize}
	\item $\exists i,j$ such that $i = j$: In this case, $c_{j,2}$ and $c_{j,4}$ form a blocking pair, 
	with $\pi \to_{c_{j,2},c_{j,4}} \pi'$ as 
	$u_{c_{j,2}}(\pi) \leq b+d-(e+f) < b+d \leq u_{c_{j,2}}(\pi')$ and
	$u_{c_{j,4}}(\pi) \leq b+d-(e+f) < b+d \leq u_{c_{j,4}}(\pi')$. 
	Hence, this allocation is not 2-stable.
	\item $\nexists i,j$ such that $i \neq j$: 
	In this case, $c_{i,2}$ and $c_{j,4}$ form a blocking pair with $\pi \to_{c_{i,2},c_{j,4}} \pi'$ as 
	$u_{c_{i,2}}(\pi) \leq b-(d+e+f) < b+d \leq u_{c_{i,2}}(\pi')$ and 
	$u_{c_{j,4}}(\pi) \leq b-(d+e+f) < b+d \leq u_{c_{j,4}}(\pi')$. 
	Thus, this is not a 2-stable allocation. 
\end{itemize}
\end{proof}

\begin{proof}[Proof of Lemma \ref{StableL5}]
	Suppose, $\exists s_{i,1}, s_{i,2}$ such that $\pi(s_{i,1}) \in A_{S_k}$ 
	and $\pi(s_{i,2}) \in A_{S_k}$ for some $k \in \{1,2\}$. Since $\pi$ is 
	2-stable, by lemma \ref{StableL1}, $\exists s_{j,1}, s_{j,2}$ such that
	$\pi(s_{j,1}) \in A_{S_{3-k}}$ and $\pi(s_{j,2}) \in A_{S_{3-k}}$. But, $s_{j,1}$ and $s_{i,1}$
	form a blocking pair with $\pi \to_{s_{j,1}, s_{i,1}} \pi'$, as $\forall x \in \{1,2\}$ 
	$u_{s_{j,x}}(\pi) \leq b - d < b \leq u_{s_{j,x}}(\pi')$.
\end{proof}


\thmTwoStableDAG*
\begin{proof}
To compute the allocation, we first do a topological sort of the
player graph and assign items in this topological ordering.  For each
player, we assign the item from amongst the unassigned items which
maximizes the player's net utility. If there are multiple such items,
we choose an arbitrary item.  Let $\pi$ be a resulting
allocation. Suppose $\pi$ is not core stable. Then, there exists a
blocking coalition $X$ with $\pi'$ and $\mu:X \to X$ such that
$\forall x \in X$ $\pi'(x) \neq \pi(x), \pi'(x) = \pi(\mu(x))$ and
$u_x(\pi'(x)) > u_x(\pi(x))$. Let $y$ be the first player from $X$ in
the topological ordering. Therefore, $u_y(\pi'(y)) >
u_y(\pi(y))$. Note that, since $\pgraph$ is a DAG, the utility of $y$
depends only on the items allocated to players before $y$ in the
topological order. But, since $y$ is the first player in $X$ to be
allocated an item, both $\pi'(y)$ and $\pi(y)$ are available at the
time $y$ is allocated an item. Since $y$ chooses the item which
maximizes its utility, $u_y(\pi'(y)) > u_y(\pi(y))$ is a
contradiction.
\end{proof}

\thmCycleOneConnectedNode*
\begin{proof}
Let the players in the cycle, in order, be numbered from 1 to $n$, and
$w_{j-1,j} = c_j$ $\forall j \in \{2,3,\ldots,n\}$ and $w_{n,1} =
c_1$. Without loss of generality, let the first player be one with a
non-negative neighbourhood valuation. That is, $w_{n,1} = c_1$, $c_1 >
0$. We assign the first player an item connected to all of the
remaining items. We assign the remaining items in the order
$\{2,3,\ldots,n\}$. For each player, we assign the item from amongst
the unassigned items which maximizes the player's net utility. If
there are multiple such items, we choose an arbitrary item. Let $\pi$
be a resulting allocation. Suppose, $\pi$ is not 2-stable. Therefore,
$\exists x,y$ such that $\pi \to_{x,y} \pi'$. Without loss of
generality, let $x$ be the player assigned an item first.  There are
two cases:
\begin{itemize}
\item $x = 1$: Since the item valuations are uniform, the only difference in utility is because of the neighbourhood structure. Since $a = \pi(1)$ is connected to each of the items and $w_{n,1} > 0$, $u_1(\pi'(1)) \leq u_1(\pi(1))$. Hence, there is a contradiction. 
\item $x \neq 1$: Since player $x-1$ has already been allocated an item,
the utility of $x$ is fixed on choosing an item. Since $(x,y)$ is a blocking pair, $u_x(\pi'(x)) > u_x(\pi(x))$. 
But, when $x$ is allocated an item, both $\pi(x)$ and $\pi'(x)$ are available. Since $x$ chooses the item which 
maximizes its utility, $u_x(\pi'(x)) > u_x(\pi(x))$ is a contradiction.
\end{itemize}
\end{proof}

\thmCycleOneDegreeItemCoreStable*
\begin{proof}
Let the players in the cycle, in order, be numbered from 1 to $n$. We
assign the first player an item with degree at most 1. We assign the remaining items in the order
$\{2,3,\ldots,n\}$. For each player, we assign the item from amongst the
unassigned items which maximizes the player's utility. If there
are multiple such items, we choose an arbitrary item. Let $\pi$ be a resulting 
allocation. Suppose $\pi$ is not core stable. Then, there exists a blocking coalition $X$ 
with $\pi'$ and $\mu:X \to X$ such that $\forall x \in X$ $\pi'(x) \neq \pi(x), \pi'(x) = \pi(\mu(x))$
and $u_x(\pi'(x)) > u_x(\pi(x))$. Let $y$ be the first player 
from $X$ in the ordering. Therefore, $u_y(\pi'(y)) > u_y(\pi(y))$. There are two cases:
\begin{itemize}
\item $y = 1$: Since the item valuations are uniform, the only difference in utility is because of the neighbourhood structure. 
Since $a = \pi(1)$ has degree one, it must be connected to a single item. Since player two is the next player to be allocated an item, 
and all connection weights are positive, the item connected to $\pi(1)$ (say $b$) must be allocated to player two. Additionally, 
since the graph is connected, $\pi(2)(=b)$ must be connected to at least one node other than $\pi(1)$ (say $c$). Since player three is the next player
to be allocated an item, it must be allocated an item $c$. Since players two and three have their maximum possible utility under $\pi$, and $\pi(1)$
is connected only to $\pi(2)$, there can be no player $x$ such that $\mu(x) = 1$, and $u_x(\pi(1)) > u_x(\pi(x))$. Hence, there is a contradiction.
\item $y \neq 1$: Since player $y-1$ has already been allocated an item,
the utility of $y$ is fixed on choosing an item. 
But, since $y$ is the first player in $X$ to be
allocated an item, both $\pi'(y)$ and $\pi(y)$ are available at the time
$y$ is allocated an item. Since $y$ chooses the item which maximizes its utility,
$u_y(\pi'(y)) > u_y(\pi(y))$ is a contradiction. 
\end{itemize}
\end{proof}


\CycleDirectedEF*

\begin{proof}
It is easy to see that the problem is in NP. We show NP-hardness by
giving a reduction from the Hamiltonian cycle problem. Consider an
instance of the Hamiltonian cycle problem given by the connected graph
$Q= (T, E)$.  We construct an instance of the \name{} as follows: the item
graph $\igraph=(A,\lambda)$ is the graph $Q$. The player graph is a
directed \cycle{} $\pgraph=(N,\tau,w)$ where $|N| = |T|$ and for
$i \in N$, $i_{\tau} = \{i-1\}$ where $n+1 = 1$ and $1-1 = n$. The
player graph is unweighted, i.e.\ for all $(i,j) \in \tau$,
$w_{i,j}=1$.  We set $v_i(a)=0$ for all $i \in N$ and $a \in A$.

If there exists a Hamiltonian cycle in $Q$,
we can construct an envy-free allocation as follows. We allocate items
to the players from $1$ to $n$ in the order given by the Hamiltonian
cycle. We now argue that the existence of an envy-free allocation
implies the existence of a Hamiltonian cycle in $Q$. It suffices to
show that, in an envy-free allocation $\pi$, for all $i \in N$,
$(i-1) \in d_i(\pi)$ (recall that $1-1=n$).
If each player is connected to their preceding player in the
allocation $\pi$, we can simply start from the item allocated to
player $n$, and construct the cycle by adding the edge to the
allocation of the preceding player. Let us assume, to the contrary,
that there exists some envy-free allocation $\pi$ such that, for some
$i \in N$, $i-1 \not\in d_i(\pi)$. Since the item graph is connected,
there must exist some $j \in N$ such that $i-1 \in
d_j(\pi)$. 
Thus, $u_i(\pi) < u_i(\pi')$ where $\pi'(i) = \pi(j)$, $\pi'(j)
= \pi(i)$ and for all $k \in N -\{i,j\}$, $\pi(k) = \pi'(k)$.  Hence,
$\pi$ is not an envy-free allocation which is a contradiction.
\end{proof}

\end{document}